\let\originalleft\left
\let\originalright\right
\renewcommand{\left}{\mathopen{}\mathclose\bgroup\originalleft}
\renewcommand{\right}{\aftergroup\egroup\originalright}
\theoremstyle{plain}
\newtheorem{theorem}{Theorem}
\theoremstyle{remark}
\DeclareMathOperator{\sinc}{sinc}
\renewcommand{\Im}{\operatorname{Im}}
\newcommand{\ringplus}{\mathbin{\text{\@ringplus}}}
\newcommand{\@ringplus}{%
  \ooalign{\hidewidth\raise1.3ex\hbox{\tiny$\circ$}\hidewidth\cr$\m@th+$\cr}%
}
\newcommand{\ringminus}{\mathbin{\text{\@ringminus}}}
\newcommand{\@ringminus}{%
  \ooalign{\hidewidth\raise0.9ex\hbox{\tiny$\circ$}\hidewidth\cr$\m@th-$\cr}%
}
\newcommand{\tp}[0]{\mathrm{T}}
\DeclareFontFamily{U}{wncy}{}
\DeclareFontShape{U}{wncy}{m}{n}{<->wncyr10}{}
\DeclareSymbolFont{mcy}{U}{wncy}{m}{n}
\DeclareMathSymbol{\Sh}{\mathord}{mcy}{"58}
\providecommand{\abs}{}
\renewcommand{\abs}[1]{\left\lvert{#1}\right\rvert}
\newcommand{\reals}[0]{\mathbb{R}}
\newcommand{\complex}[0]{\mathbb{C}}
\newcommand{\integers}[0]{\mathbb{Z}}
\providecommand{\op}[1]{}
\renewcommand{\op}[1]{\hat{#1}}
\newcommand{\opvec}[1]{\op{\vec{#1}}}
\newcommand{\mat}[1]{\bm{\mathrm{#1}}}
\renewcommand{\vec}[1]{\bm{\mathrm{#1}}}
\newcommand{\partialBL}{\partial^\mathrm{(BL)}}
\newcommand{\blk}{\color{black}}
\DeclareMathOperator{\Li}{Li}
\newcommand{\valforhbar}{}
\begin{document}

\title{Quantum lattice models that preserve continuous translation symmetry}

\author{Dominic G. Lewis}
 \email{dominic.lewis@student.rmit.edu.au}
\affiliation{%
 Center for Quantum Computation and Communication Technology, School of Science, RMIT University, Melbourne, Victoria 3000, Australia
}%
\author{Achim Kempf}
 \affiliation{Institute for Quantum Computing, University of Waterloo, Waterloo, Ontario N2L 3G1, Canada
 }%

\affiliation{%
 Department of Applied Mathematics, University of Waterloo, Waterloo, Ontario N2L 3G1, Canada
}%

\author{Nicolas C. Menicucci}%
  \email{nicolas.menicucci@rmit.edu.au}
\affiliation{%
 Center for Quantum Computation and Communication Technology, School of Science, RMIT University, Melbourne, Victoria 3000, Australia
}%

\date{November, 2023}

\begin{abstract} 
     Bandlimited approaches to quantum field theory offer the tantalizing possibility of working with fields that are simultaneously both continuous and discrete via the Shannon Sampling Theorem from signal processing. Conflicting assumptions in general relativity (smooth spacetime) and quantum field theory (high-energy deviations from low-energy emergent smoothness) motivate the use of such an appealing analytical tool that could thread the needle to meet both requirements.
     Bandlimited continuous quantum fields are isomorphic to lattice theories---yet without requiring a fixed lattice. Any lattice with a required minimum spacing can be used. This is an isomorphism that avoids
     taking the limit of the lattice spacing going to zero. In this work, we explore the consequences of this isomorphism, including the emergence of effectively continuous symmetries in quantum lattice theories. One obtains conserved lattice observables for these continuous (Noether) symmetries, as well as a duality of locality from the two perspectives. We expect this work and its extensions to provide useful tools for considering numerical lattice models of continuous quantum fields (e.g.,~lattice gauge theories) arising from the availability of discreteness without a fixed lattice, as well as offering new insights into emergent continuous symmetries in lattice models and possible laboratory demonstrations of these phenomena.
\end{abstract}
\maketitle

\section{Introduction \label{introduction}}

The combination of effects of quantum mechanics with those of general relativity is generally expected to imply the existence of a minimum length scale~\cite{UpdatedHossenfelder2013, UpdatedWheeler1957, UpdatedGaray1995, PhysRevD.52.1108, UpdatedMead1964}: in essence, higher precision in measurements in position leads to higher uncertainty in momentum and thus higher uncertainty in curvature. Inevitably, a point will be reached where the uncertainty in curvature fundamentally restricts our ability to precisely measure position, implying that the notion of distance breaks down in some way at a minimum length scale. 
It is estimated that this minimum length~$\Delta x$ is on the order of the Planck length $l_p = \sqrt{\frac{G\hbar}{c^3}}\approx 10^{-35}~\mathrm{m}$.

This notion of minimum length scale also appears in quantum field theory. Despite being some of the most successful theories of the last century for matching predictions with experimental results, our current best quantum field theories~(QFTs) break down at the smallest scales and give nonsensical results, which are resolvable in some circumstances using renormalization~\cite{UpdatedPeskin2018}. This indicates that it may be natural to assume that spacetime, at its smalles scales, is not arbitrarily smooth and distances cannot be resolved to arbitrary precision. Most of the current approaches to quantum gravity acknowledge some sort of small-scale structure or indefiniteness to spacetime at the smallest scales, including string theory \cite{UlfDanielsson_2001, KONISHI1990276, UpdatedSusskind1994}, spinfoam models of loop quantum gravity \cite{UpdatedPerez2003, UpdatedAshtekar2004}, causal set theory \cite{UpdatedSorkin2005}, and others (see \cite{UpdatedHossenfelder2013} for a recent review).  \blk

\begin{figure}
    \centering
    \includegraphics[width=\linewidth]{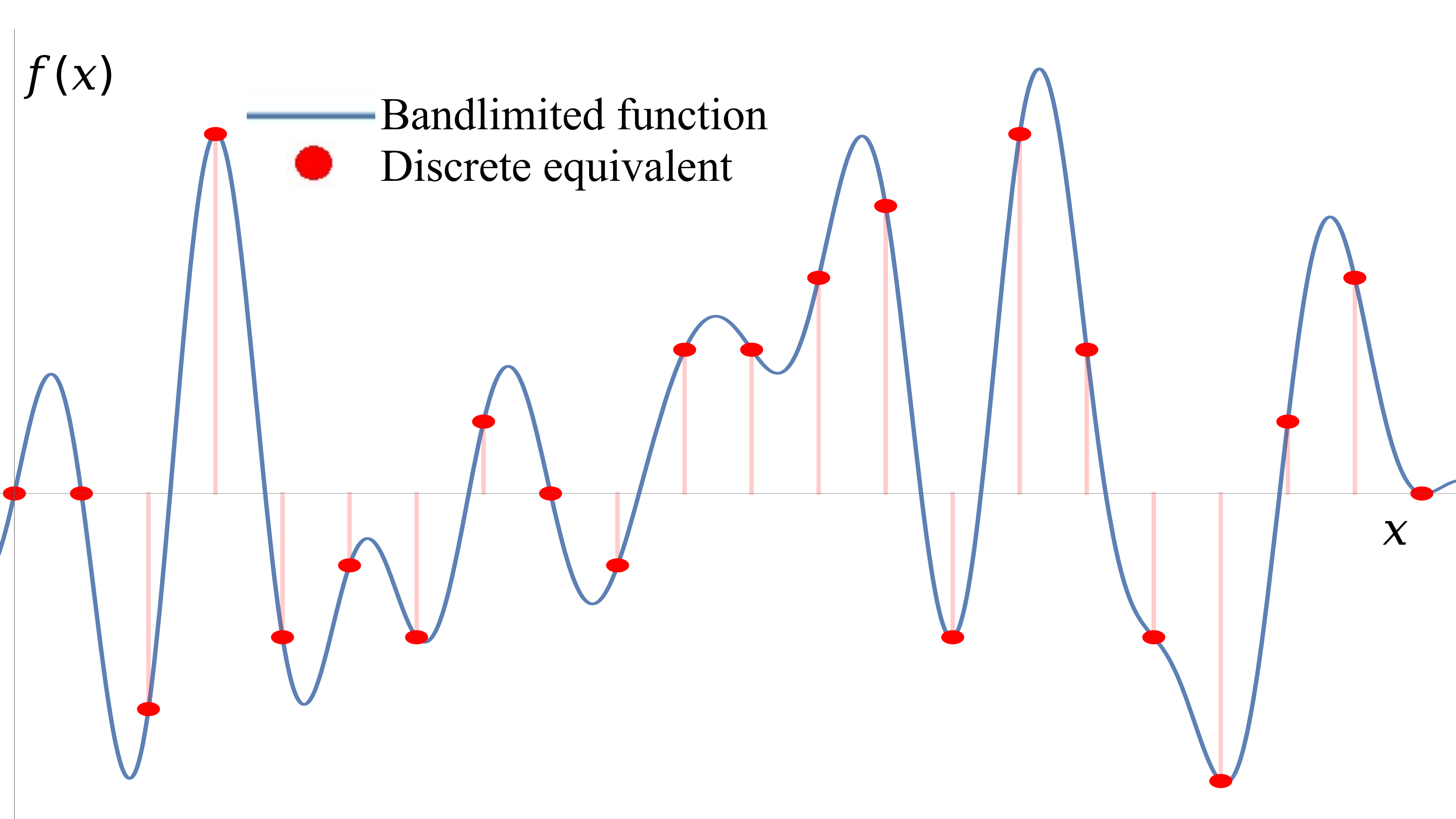}
    \caption{Example of a bandlimited function (blue) and its discrete equivalent (red) produced by sampling the function on an evenly spaced lattice.}
    \label{fig:bandlimitedfunctionsampling}
\end{figure}
Given this apparent breakdown of continuous quantum fields at the smallest scales and the implication of a minimum length scale discussed previously, it would make sense to simply switch to a lattice field theory when considering systems at very small scales. This has been done previously with considerable success in lattice QCD where switching to a lattice theory aids in computation \cite{UpdatedMuroya2003, UpdatedDrell1974}. That being said, a quantum field theory based purely on a lattice is still limited by the fact that it is difficult to merge with general relativity. Einstein's equations of relativity require spacetime to be a continuous and smooth manifold at all scales, meaning that it is unlikely for a lattice theory to be compatible.

This notion that spacetime may be simultaneously continuous and discrete presents an apparent contradiction. However, a method of representing a mathematical object such as a function or field as equivalently continuous and discrete while simultaneously introducing a minimum length scale can be found in information theory. 

 Consider a classical signal whose Fourier transform has compact support in frequency.
Such a signal is called \emph{bandlimited}, with the width of the support in frequency called the bandwidth. A bandlimited signal can be reconstructed perfectly---i.e.,~with zero loss of information in the absence of noise---from any sufficiently dense lattice of sample points~\cite{UpdatedShannon1948, lerman2015shannon}. The same holds for spatial functions bandlimited in spatial frequency.
Notably, for a spatial function~$f(x)$ with symmetric frequency support about zero, i.e.,~for $|k|< \Omega$, where $k$ is the spatial frequency (wavenumber), the continuous function can be reconstructed from its values on a uniform lattice of spacing $\Delta x = \frac{\pi}{\Omega}$, where $\Omega$ is the \emph{bandlimit} (i.e.,~maximum spatial frequency). $\Delta x$~is called the \emph{Nyquist spacing} and is the maximum lattice spacing from which a continuous bandlimited function can be perfectly reconstructed. Note that under these conditions, the reconstruction of the function from its samples is perfect---i.e., it is not an approximation~\cite{UpdatedPye2015, UpdatedShannon1948, lerman2015shannon}.

The same is true for a field, whether classical or quantum~\cite{PhysRevLett.85.2873, PhysRevLett.103.231301, PhysRevD.74.103517, PhysRevD.63.083514, UpdatedGrimmer2022}. In fact, in QFT it is already possible to describe an inability to resolve distances to arbitrary precision. This is done using  bandlimited fields---i.e.,~those with a hard UV cutoff. It was shown by Pye, Donelly, and Kempf~\cite{UpdatedPye2015} that the spatial profile of a continuous bandlimited quantum field reconstructed from a lattice will be incompressible below the Nyquist spacing, meaning that a reconstruction of a bandlimited field  from a  denser lattice than that prescribed by the Nyquist condition is completely equivalent to a reconstruction of the same field from a lattice at the Nyquist spacing.

Many common quantum field theories are represented in terms of their Fourier transform~\cite{UpdatedPeskin2018, ng2009introduction}, making the act of imposing a UV cutoff as simple as limiting the domain of the Fourier representation of the fields.  A fiducial cutoff may be applied as part of a renormalization process\blk~\cite{UpdatedPeskin2018}. Instead, we consider here that nature itself possesses a finite UV cutoff, analogous to the maximum frequency with which a human ear can perceive sound. Where frequencies higher than this are not detected and thus are `cut off'~\cite{UpdatedSumarno2019} naturally and do not need to be considered when transmitting or receiving audio signals. In this cut-off universe, frequencies beyond the cut-off cannot interact with nor be detected by anything existing below the cut-off. As a result, those frequencies can be treated as if they do not exist at all. Thus, we take the notion of a UV cut-off to quantum fields and treat it as a physical property of our universe as opposed to a tool of re-normalization.\blk 

Bandlimited quantum fields have been investigated to some extent over the last two decades~\cite{UpdatedHenderson2020, kempf1999generalized, UpdatedKempf2013, UpdatedKempf2021, UpdatedPye2015, UpdatedPye2019}, with much of said investigation being done by one of us (Kempf)~\cite{UpdatedPye2015, kempf1999generalized, PhysRevLett.85.2873, PhysRevLett.103.231301, UpdatedKempf2013, UpdatedKempf2021, PhysRevD.69.124014}.  These results have shown that a \blk covariantly bandlimited quantum field in Minkowski space can be achieved by cutting off the spectrum of the D'Alembertian~\cite{UpdatedKempf2013}. Unlike in the non-covariant case, a field featuring such a covariant UV cut-off  may still have spatial modes at or beyond the `bandlimit'. However, the \emph{temporal} bandwidth for such modes are suppressed to the extent that they are essentially frozen in time and do not significantly affect the dynamics of the spatial modes below the cut-off. For a theoretical study and applications in cosmology, see \cite{UpdatedKempf2013, ChatwinDaviesNaturalCovariant, Kempfcovariantinformationdensity, chatwindavieskempfCovariantpredictions}. 
That being said, sampling theory for a covariantly cut-off field requires functional analitic tools and attempting to reconstruct such a field from a lattice after boosts becomes unwieldy \cite{UpdatedKempf2013}.

Here we consider only a noncovariant 
 momentum  cutoff as a limiting case of the covariant cutoff \cite{UpdatedKempf2013}.  
Imposing a hard, non-covariant, UV cutoff to quantum fields has the distinct benefit of creating a bridge between the continuous and the discrete with a sampling theory that is relatively easy to  handle mathematically and whose qualitative results should still have some analog in the fully covariant case in the nonrelativistic limit \cite{UpdatedPye2015, UpdatedKempf2013, UpdatedPye2023}.  
In this paper, we use the framework of non-covariantly bandlimited quantum fields to provide a toolbox for the representation of lattice theories as continuous fields and vice versa. We investigate the emergence of continuous symmetries in the lattice framework, discuss the arbitrariness of the particular chosen lattice, and discuss the duality of locality and non-locality when transforming between the two pictures. We limit our discussion to $(1+1)$-dimensional scalar fields for simplicity of presentation, leaving extensions to other fields to future work.
In Sec.~\ref{sec:bandlimitedKG}, we outline the notion of bandlimited quantum field theories and introduce the mathematical toolbox that comes with bandlimitation, including discrete equivalents to traditionally continuous operations that are lattice independent despite being discrete.

We determine the Hamiltonian for the bandlimited free Klein-Gordon field as a
lattice theory using this toolbox. In Sec.~\ref{section:DiscreteQFT}, we determine the Hamiltonian for the continuous, bandlimited field representing the discrete harmonic chain (i.e.,~the usual example with nearest-neighbor coupling only) and discuss the differences between the harmonic chain field and the bandlimited Klein-Gordon field in both the continuum and on the lattice, despite these fields agreeing in the limit~$\Delta x \to 0^+$. Additionally, we use the lifting of discrete fields to a continuum to introduce the emergence of fully continuous translational symmetry on lattice theories and its associated Noether observable, the total field momentum. Finally, we conclude with possible uses of this toolbox of results in Sec.~\ref{sec:outlook}.


\section{Equivalent discrete representation of a continuous bandlimited quantum field}\label{sec:bandlimitedKG}


\subsection{Bandlimited quantum fields}

First, we will investigate some of the effects that a UV cut-off has on the form and behavior of a quantum field. To do so we will introduce some of the mathematical tools associated with bandlimited functions and how they can be extended to bandlimited quantum fields. We begin with a  bandlimited function. As stated previously, when a function's Fourier transform is supported only over a  bounded frequency interval~\blk $D$, given by
\begin{align}
\label{eq:BLfwithD}
    f(x) = \int_D \frac{dk}{2\pi}\tilde{f}(k)e^{-ikx},
\end{align}
that function is called bandlimited.
If the region~$D$ is an open interval symmetric about the origin, $D = (-\Omega, \Omega)$,%
\footnote{We restrict to an open interval despite the fact that $f(x) = \cos(\Omega x)$ can be perfectly reconstructed from a lattice symmetric about the origin. We do this because $f(x) = \sin(\Omega x)$ cannot be reconstructed using this lattice (the reconstruction gives $f(x)=0$ in this case), which means that the claim about the lattice being arbitrary fails to hold for signals at the bandlimit~$\Omega$. Since we want to preserve this arbitrariness of lattice, we excise the boundary points from the spectral support~$D$. This issue has been addressed in previous work by one of us in Ref. \cite{PhysRevLett.85.2873}.}
the function can be reconstructed perfectly from discrete samples on any uniform lattice under the condition that said lattice has spacing less than or equal to $\frac{\pi}{\Omega}$. The lattice of spacing $\frac{\pi}{\Omega}$ is called the Nyquist lattice and the reconstruction is given by the Shannon reconstruction formula~\cite{UpdatedShannon1948, UpdatedPye2015}:

\begin{align}
\label{shannon}
    f(x) = \sum_{j\in\integers} f(x_j) \sinc_\pi \left(\frac{x-x_j}{\Delta x}\right)
    ,
\end{align}
where%
\footnote{In some works, $\sinc_\pi$ is called the \emph{normalized} sinc function and is defined without the $\pi$ subscript. We will keep the notation introduced here to separate the definitions of $\sinc$ and $\sinc_\pi$ and thereby avoid any confusion.} 
\begin{align}
    \sinc_\pi(x) &\coloneqq \sinc(\pi x)
    ,
    \label{sincpi}
\\
    \sinc(x) &\coloneqq 
    \begin{cases}
    \frac{\sin(x)}{x} & \text{if $x \neq 0$,} \\
    1 & \text{if $x = 0$,}
    \end{cases}
\intertext{and}
x_j 
&\coloneqq
    j\Delta x + b  
, 
\label{xjdef}
\end{align}

for some $b\in [0, \Delta x)$.

 When $x$ is restricted to integers, $\sinc_\pi$ reduces to the Kronecker delta:
\begin{align}
    (\sinc_\pi) \rvert_\integers (n) &= \delta_{n,0}
    .\label{sinclattice}
\end{align}
Note that Eq.~\eqref{shannon} is an equality---not an approximation. 
A proof of this can be found in Appendix~\ref{app:shannonproof}. 

Additionally, note that from the freedom of choice of $b$, the continuous representation of $f$ is equivalent to an infinite number of discrete representations, where different values for $b$ define different lattices from which to sample $f$.

Finally, we note that any bandlimited function is an entire function, i.e., it is analytic in the entire complex plane~\cite{PhysRevD.69.124014, UpdatedBoyd2003}.\blk

Turning now to quantum fields~\cite{PhysRevLett.85.2873, PhysRevD.69.124014, UpdatedPye2015}, we note that a bandlimtied scalar field can be written as an infinite sum of discrete samples on a uniform lattice of spacing $\Delta x$ using Shannon reconstruction:
\begin{align}
    \label{BLphidef}
    \op{\phi}(x) =  \int_{-\Omega}^\Omega \frac{dk}{2\pi}\op{\tilde{\phi}}(k)e^{-ikx} = \sum_{j\in\integers} \op{\phi}(x_j)\sinc_\pi\left(\frac{x-x_j}{\Delta x}\right)
    .
\end{align}

For a bandlimited field, we write the field and its conjugate momentum following Eq.~(\ref{BLphidef}):
\begin{align}
    \op{\phi}(x) &= \sum_{j\in\integers} \op{\phi}( x_j ) \sinc_\pi \left(\frac{x-x_j}{\Delta x}\right),\label{phiBL}\\
    \op{\pi}(x) &= \sum_{j\in\integers} \op{\pi}( x_j )\sinc_\pi \left(\frac{x-x_j}{\Delta x}\right),\label{piBL}
\end{align}
where $\op{\phi}$ and $\op{\pi}$ satisfy the modified commutation relation given by~\cite{UpdatedPye2015}
\begin{align}
    \left[\op{\phi}(x),\op{\pi}(y)\right] 
    &=\frac{i\valforhbar}{\Delta x}\sinc_\pi \left(\frac{x-y}{\Delta x}\right)\label{contcom}
    ,
\end{align}
where we use $\hbar = 1$ throughout. 
To keep the notation compact, we define lattice operators $\op{q}$ and $\op{p}$ in terms of $\op{\phi}$ and $\op{\pi}$, respectively:
\begin{subequations}
\label{qpdef}
\begin{align}
    \op{q}_j &\coloneqq \op{\phi}(x_j),\label{qdef}\\
    \op{p}_j &\coloneqq \Delta x\, \op{\pi}(x_j).\label{pdef}
\end{align}
\end{subequations}
Evaluation of Eq.~(\ref{contcom}) on a lattice of spacing $\Delta x$ gives the commutation relation for $\op{q}$ and $\op{p}$ through Eq.(\ref{sinclattice}).
\begin{align}
    \left[\op{q}_j , \op{p}_k\right] = i\valforhbar \delta_{j,k}. \label{disccom}
\end{align}

In Eq.~(\ref{contcom}) it can be seen that when \(x-y\) is a multiple of \(\Delta x\)---i.e., the fields are evaluated at points from the same lattice---the \(\sinc_\pi\) in Eq.~(\ref{contcom}) simplifies to the Kronecker delta in Eq.~(\ref{disccom}), and thus Eq.~(\ref{contcom}) reduces to Eq.~(\ref{disccom}). However, if $x$ and $y$ are not spaced by a multiple of $\Delta x$, the commutator does not vanish for non-local interactions. Note that in the continuum limit of \(\Delta x \rightarrow 0^+\), $(\Delta x)^{-1} \sinc_\pi [(\Delta x)^{-1}(x-y)]$ acts as a nascent delta function, and the commutator reduces to
\begin{align}
    \lim_{\Delta x \rightarrow 0^+}\left[\op{\phi}(x),\op{\pi}(y)\right] = i\valforhbar \delta(x-y).
\end{align}
Note that taking this continuum limit is 
the equivalent of allowing the UV cutoff to extend to infinity.

One can also use Shannon's sampling theorem to transform from one lattice of samples to another, so long as the spacing remains consistent. We have
\begin{align}
    \op{q}_{k'}
     =
     \op{\phi}(x_{k'}) 
     =
     \sum_{j\in\integers}
     \op{q}_j\sinc_\pi\left(\frac{x_{k'}-x_j}{\Delta x}\right), 
     \label{latticetransformation}
\end{align}
where $k'$ indicates that the samples at $x_{k'}$ are set on an entirely different lattice to the $x_j$'s. As such, one can transform from one lattice to another using only linear combinations of the samples on the first lattice.

The derivative of a bandlimited function is also bandlimited%
\footnote{In the frequency domain, the spatial derivative is multiplication by $ik$, which leaves the spectral support unchanged.}
and thus has a discrete equivalent. As a result, for bandlimited functions, the derivative can be written as a linear map $D$ from the function values on a given Nyquist lattice to the values of the function's derivative on the same lattice. 
We have
\begin{align}
    \label{deriv1sumdef}
    f'(x_j) = \sum_{k\in\integers}D_{jk}f(x_k).
\end{align}
The elements $D_{jk}$ of this linear map are determined by taking the derivative of the Shannon reconstruction formula and evaluating on the lattice (see Appendix \ref{app:BLderivs} for further details) and are given by
\begin{align}
    \label{derivativematrixdef}
    D_{jk} 
    =
    \begin{cases}\displaystyle
        \frac{1}{\Delta x} 
        \frac{(-1)^{j-k}}{j-k}
        &
        \textrm{if } j\neq k
        \\\displaystyle
        \vphantom{\frac{1}{\Delta x} }
        0
        &
        \textrm{if }  j=k.
    \end{cases}
\end{align}
We can extend on this notion by considering a vector $\vec{f}$ with elements $f(x_k)$ that are a bandlimited function's samples on a Nyquist lattice. Acting on it with a matrix $\mat{D}$ whose elements are $D_{jk}$ produces
\begin{align}
    \vec{f'}\coloneqq \mat{D}\vec{f},
\end{align}
where $\vec f'$ is a vector of the samples of the derivative of $f$ on the same Nyquist lattice. 
Note that the bandlimited derivative has indentical form to the SLAC derivative in lattice QCD \cite{UpdatedDrell1974, BERGNER2008946} and the infinite order finite difference stencil approximation to the derivative \cite{UpdatedFornberg2022}. We will discuss this further in section \ref{sec:outlook}. 
Like the first derivative, the second derivative of a bandlimited function is also bandlimited, and the same treatment as was done for the first derivative can be done for the second. Specifically, we write the second derivative as a linear map $D_{(2)}$ from a function's samples on a Nyquist lattice to the samples of the function's second derivative on the same lattice. We have
\begin{align}
    \label{deriv2sumdef}
    f''(x_j) 
    = 
    \sum_{k\in\integers}{[D_{(2)}}]_{jk} f(x_k),
\end{align}
where ${[D_{(2)}]}_{jk}$ can be determined by taking the second derivative of the Shannon reconstruction formula or by squaring the matrix $\mat{D}$ (see Appendix \ref{app:BLsecondderivs} for details) and are given by
\begin{align}
    {[D_{(2)}]}_{jk}
    \coloneqq
    \begin{cases}
    \displaystyle
        -\frac{\pi^2}{3(\Delta x)^2}
        &
        \textrm{if }j = k
        \\
    \displaystyle
        -\frac{2}{(\Delta x)^2}
        \frac{(-1)^{j-k}}{(j-k)^2}
        &
        \textrm{if }j\neq k.
    \end{cases}
\end{align}
Like with the first derivative, we can also write this as a matrix $\mat{D}_{(2)}$ acting on a vector of function samples to give a vector of samples of the function's second derivative
\begin{align}
    \vec{f''}
    \coloneqq
    \mat{D}_{(2)}\vec{f} 
    = \mat{D}^2\vec{f}.
\end{align}
Since both $\mat{D}$ and $\mat{D}_{(2)}$ are Toeplitz matrices, i.e.,~containing elements dependent on only $(j-k)$, we can reindex the sums in Eqs.~\eqref{deriv1sumdef} and~\eqref{deriv2sumdef} such that $k-j=m$ to write
\begin{align}
    \partial_x f(x)\Big|_{x=x_j} 
    &= 
    \frac{-1}{\Delta x}
    \sum_{n\neq 0}
    \frac{(-1)^n}{n}
    f(x_j+n\Delta x),
\\
    \partial_{xx} f(x)\Big|_{x=x_j}
    &= 
    -\frac{\pi^2}{3(\Delta x)^2}
    f(x_j) 
    \nonumber
\\
    &
    \hspace{0.4cm}
    -\frac{2}{(\Delta x)^2}
    \sum_{n\neq0}
    \frac{(-1)^n}{n^2}
    f(x_j + n\Delta x).
\end{align}
 For bandlimited functions, these linear maps can act on the continuous representation of the function as well as the discrete one, acting as equivalents to the standard notion of the derivative.
These \emph{bandlimited derivatives}, indicated with $\mathrm{^{(BL)}}$, are shown in Eqs.~\eqref{eq:BLderivdef} and~\eqref{eq:BL2derivdef} as equivalents to the first and second derivative (note again that the method of calculation of these bandlimited derivatives can be found in Appendices~\ref{app:BLderivs} and~\ref{app:BLsecondderivs}):
\begin{align}
    \partialBL_x
&=
    \frac{-1}{\Delta x}\sum_{n\neq0}\frac{(-e^{\Delta x \partial_x})^n}{n},
    \label{eq:BLderivdef}
\\
    \partialBL_{xx}
&=
    -\frac{\pi^2}{3(\Delta x)^2}-\frac{2}{(\Delta x)^2}\sum_{n\neq 0}\frac{(-e^{\Delta x\partial_x})^{n}}{n^2}
    ,
    \label{eq:BL2derivdef}
\end{align}
where $\partial_{xx}$ is the second $x$ derivative, and $e^{a \partial_x}$ is a displacement operator, viz.
\begin{align}
\label{eq:transop}
    e^{a \partial_x} f(x) = f(x+a),
\end{align}
in which the left-hand side is a compact representation of the Taylor series of $f$ at $x$ (expand the exponential as a power series), and the right-hand side holds when $f$ is analytic---which is true for any bandlimited function~\cite{PhysRevD.69.124014, UpdatedBoyd2003, UpdatedPollock2012}, although these operators can be applied to any analytic function. Note also that $(e^{a \partial_x})^n = e^{na \partial_x}$, which simply says that $n$ iterations of the same displacement is the same as displacing by $n$ times the original amount.

Applying the bandlimited first and second derivative operators to a bandlimited quantum field~$\op \phi$ in the continuous representation gives
\begin{align}
    \partialBL_x \op \phi(x)
&=
    \frac{-1}{\Delta x}\sum_{n\neq0}
    \frac{(-1)^n}{n}
    \op \phi (x + n\Delta x)
    ,
    \label{eq:BLderiveval}
\\
    \partialBL_{xx} \op \phi(x)
&=
    -
    \frac{\pi^2}{3(\Delta x)^2} \op \phi(x)
\nonumber \\
&\quad
    -
    \frac{2}{(\Delta x)^2}
    \sum_{n\neq 0}
    \frac{(-1)^{n}}{n^2}
    \op \phi(x + n \Delta x).
    \label{eq:BL2deriveval}
\end{align}
When $\Delta x$ generates a Nyquist lattice (or a finer one), then $\partial_x \to \partialBL_x$ and $\partial_{xx} \to \partialBL_{xx}$ are valid as exact replacements. Additionally, we define the `derivative' of the discrete field representation on the Nyquist lattice as
\begin{subequations}
\label{qpprimedefs}
\begin{align}
    \op{q}'_j
    &\coloneqq 
    \phantom{\Delta x}\,
    \partialBL \hat{\phi}(x)\Big|_{x=x_j}
    =
    \sum_{k\in\integers} D_{jk}\op{q}_k,
\\
    \op{p}'_j
    &\coloneqq
    \Delta x\,
    \partialBL \op{\pi}(x)\Big|_{x=x_j}
    =
    \sum_{k\in\integers}D_{jk}\op{p}_k.
\end{align}
\end{subequations}

We note in passing that the bandlimited first and second derivatives in Eqs.~(\ref{eq:BLderivdef}) and (\ref{eq:BL2derivdef}) are equivalent, by direct comparison, to the finite difference approximations of the first and second derivative using infinite-order stencils~ \cite{UpdatedFornberg2022}. This is discussed further in Appendices~\ref{app:BLderivs} and~\ref{app:BLsecondderivs}.

Finally, it can be shown that integration of the product of two bandlimited functions can be written as a sum of products of their samples~\cite{PhysRevLett.85.2873}:
\footnote{This is a special case of Parseval's theorem for the Fourier series~(a) and the Fourier transform~(b), under the assumption that $f$ and $g$ are bandlimited:
\begin{align*}
    \sum_{i\in\integers} [f(x_i)]^* g(x_i) \Delta x
\overset {(a)} {=}
    \int_\mathbb{R} [\tilde f(k)]^* \tilde g(k) \frac {dk} {2\pi}
\overset {(b)} {=}
    \int_\mathbb{R} [f(x)]^* g(x) dx
    .
\end{align*}
}
\begin{align}
    \sum_{i\in\integers} f(x_i)g(x_i) \Delta x = \int_\mathbb{R} f(x)g(x)dx.
    \label{eq:BLintdef}
\end{align}
In other words, a Riemann-sum approximation to the integral of a product of bandlimited functions is exact if it samples the functions at or above the Nyquist rate.

\subsection{Klein--Gordon Hamiltonian}
The free Klein--Gordon field in one dimension is a simple QFT, making it a useful starting point for determining a discrete equivalent when it is bandlimited. Its Hamiltonian is given as
\begin{align}
    \op{H} &=\frac{1}{2} \int_{\mathbb{R}} dx\, \left[ \op{\pi}^2 (x)  + [\partial_{x} \op{\phi}(x)]^2 + m^2 \op{\phi}^2(x) \right].\label{KGHam1}
\end{align}
We will now use the tools tools introduced above to determine its exact equivalent representation as a discrete system (i.e.,~without taking a continuum limit) under the assumption that $\op{\phi}$ and $\op{\pi}$ are bandlimited with a UV cutoff and can be represented using Eqs.~(\ref{phiBL}) and (\ref{piBL}). Using integration by parts to rewrite $\op{H}$ as
\begin{align}
    \op{H} &=\frac{1}{2} \int_{\mathbb{R}} dx\, \left[ \op{\pi}^2 (x)  - \op{\phi}(x) \partial_{xx} \op{\phi}(x) + m^2 \op{\phi}^2(x)\right]\label{KGHam2}
\end{align}
will make the evaluation of the Hamiltonian simpler. 

By replacing $\partial_{xx} \to \partialBL_{xx}$ and then replacing the integral with a sum of Nyquist samples using Eq.~\eqref{eq:BLintdef}, we can write
\begin{align}
    \op{H} &=\frac{\Delta x}{2}
    \sum_{j\in\integers}
    \left[ \op{\pi}^2 (x)  - \op{\phi}(x) \partialBL_{xx} \op{\phi}(x) + m^2 \op{\phi}^2(x)\right]_{x=x_j}
    .
\end{align}
We may expand the bandlimited derivative term using Eq.~\eqref{eq:BL2deriveval},
\begin{align}
&
    - \op{\phi}(x) \partialBL_{xx} \op{\phi}(x)
    \Bigr\rvert_{x=x_j}
\\*
&\quad=
    \frac{\pi^2}{3(\Delta x)^2} \op \phi^2(x_j)
    +
    \frac{2}{(\Delta x)^2}
    \sum_{n\neq 0}
    \frac{(-1)^{n}}{n^2}
    \op \phi(x_j)
    \op \phi(x_{j+n})
    ,
\nonumber
\end{align}
and then use our definitions of the field operators on a lattice, $\op{q}$ and $\op{p}$ [Eqs.~\eqref{qdef} and~\eqref{pdef}], to write
\begin{align}
    \op{H}
&=
    \frac{1}{2\Delta x}
    \sum_{j\in\integers}
    \bigg[
    \op{p}_j^2
    +
    \biggl(\frac{\pi^2}{3} + (\Delta x)^2m^2\biggr) \op{q}_j^2
\nonumber \\*
&\qquad\qquad\qquad
    +
    \sum_{n\neq0}\frac{2(-1)^n}{n^2}
    \op{q}_j\op{q}_{j+n}
    \bigg]
    .
    \label{BLKGHam}
\end{align}
Eq.~(\ref{BLKGHam}) is an exact discrete equivalent to Eqs.~(\ref{KGHam1}) and~(\ref{KGHam2}) for bandlimited $\op{\phi}$ and $\op{\pi}$ fields.

The contribution to the Hamiltonian from each $\op{q}_j$ is no longer a local coupling between field operators represented by the continuous second derivative. Instead, it is a nonlocal, weighted, alternating sum of field operators $\op{q}_j$ coupled to all others on the lattice. Figure~\ref{fig:beans} shows a representation of this. Notice that each oscillator is now coupled to all others in the lattice, and the coupling strength decays quadratically with distance.

\begin{figure}[t]
    \centering
    \includegraphics[width=\linewidth]{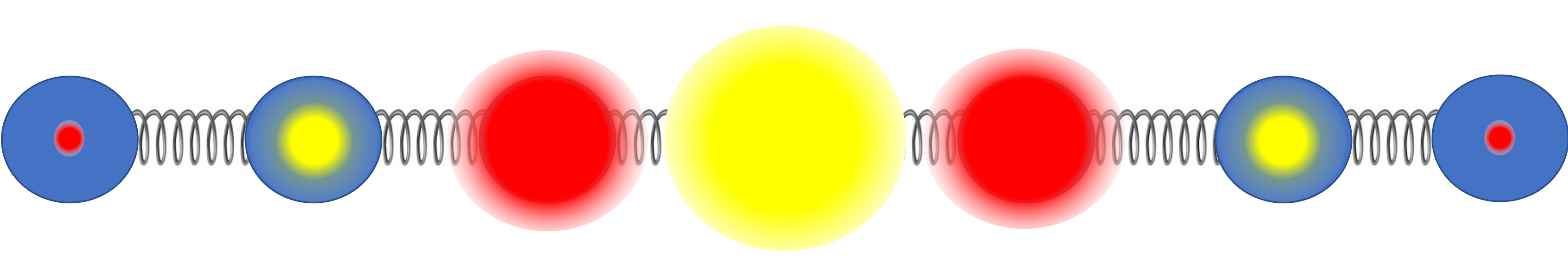}
    \caption{Graphical representation of interactions between $\op{q}$'s in a bandlimited Klein--Gordon Hamiltonian, Eq.~\eqref{BLKGHam}. A yellow glow indicates a positive contribution to the Hamiltonian, while a red glow indicates a negative contribution. The strength and sign of these contributions follows from the $(-1)^n/n^2$ term in Eq. (\ref{BLKGHam}), in that the strength of coupling between oscillators decays quadratically with distance. Note now that the presence of springs in this figure should not be taken literally as the couplings between the field values at each position are nonlocal. In fact, the couplings now reach infinitely across the field, with a strength that decays quadratically with distance between the field positions. Interestingly, it is here that we can see clearly that entirely local theories in their continuous representations can possess infinitely non-local interactions in their equivalent discrete representations. 
    }
    \label{fig:beans}
\end{figure}

It is worth considering whether the dispersion relation of this lattice model is the same as that of the continuous Klein--Gordon field. By construction, the bandlimited second derivative~$\partialBL_{xx}$ acts as an ordinary second derivative~$\partial_{xx}$ when applied to any function whose spectral support is limited to~$(-\Omega, \Omega)$. Recall that $\partial_{xx}$ is diagonal in the basis of plane waves~$e^{ikx}$, with corresponding eigenvalue~$-k^2$. This means that $\partialBL_{xx}$, in both its continuous and discrete representation, is also diagonal in this basis when restricted to only those plan waves with ${\abs{k} < \Omega}$. Thus, the Klein--Gordon Hamiltonian in its discrete representation, Eq.~\eqref{BLKGHam}, may be diagonalized in a normal-mode basis of Nyquist-sampled plane waves, and the dispersion relation will maintain the usual form, $\omega^2 = k^2 + m^2$, restricted to modes with ${\abs{k} < \Omega}$.

While this work with the Klein--Gordon field is an example of taking a bandlimited continuous QFT and representing it on a lattice with perfect equivalence, it is possible to approach this equivalence from the other direction: Starting with a Hamiltonian that is initially defined on a lattice, we can treat it as if it were the discrete representation of a continuous bandlimited field, thereby lifting the lattice model to the continuum using Shannon reconstruction, all the while avoiding ever taking the limit $\Delta x\rightarrow 0^{+}$.

\section{Discrete quantum field theories lifted to an equivalent bandlimited continuum theory \label{section:DiscreteQFT}}

In this section we show a simple example of treating a discrete field as a sample of a continuous but bandlimited field, comparing the forms of the discrete and continuous Hamiltonians with eachother as well as with those of the bandlimited Klein--Gordon field. 

\subsection{Harmonic chain Hamiltonian lifted to the continuum}
Here we will quickly derive the continuous bandlimited equivalent to the Harmonic chain Hamiltonian given by
\begin{align}
\op{H} &= \sum_{j\in\integers} \left[\frac{\op{p}_j^2}{2m} + \frac{k}{2}\left(\op{q}_{j+1}-\op{q}_j\right)^2\right].\label{HCHam}
\end{align}
Using the definitions for $\op{q}$ and $\op{p}$ in Eqs.~(\ref{qdef}) and (\ref{pdef}), Eq.~(\ref{HCHam}) can be rewritten in terms of continuous bandlimited fields using Eq.~(\ref{eq:BLintdef}):
\begin{align}
    \op{H} &= \sum_{j\in\integers} \left[\frac{\op{\pi}^2(x_j)(\Delta x)^2}{2m} + \frac{k}{2}[\op{\phi}(x_j + \Delta x)-\op{\phi}(x_j)]^2\right]\nonumber\\
    &= \frac{1}{2}\int_\mathbb{R}\left[\frac{\op{\pi}^2(x)\Delta x}{m} + \frac{k}{\Delta x}[\op{\phi}(x + \Delta x)-\op{\phi}(x)]^2\right]dx.\label{BLHCHam}
\end{align}
Notably, the nearest neighbor coupling present in the discrete representation of the harmonic chain Hamiltonian manifests itself as an interaction of the $\op{\phi}$ field with itself at a separation of $\Delta x$. Additionally, it can be seen that while only lattice translational invariance is present in Eq.~(\ref{HCHam}), full continuous translational invariance of the field is present in Eq.~(\ref{BLHCHam}). This observation will be discussed later in this section.

The continuous Hamiltonian itself can also be sampled back down to a new discrete theory using Shannon sampling. This new sampling lattice need not have oscillators at the same positions as the original discrete theory, so long as the spacing between sample points is unchanged, yet still describes the same physics as the original lattice theory. This comes from the freedom of choice of $b$ in Eq.~\eqref{xjdef} when defining the lattice of samples $x_j$. The $\op{q}$'s and $\op{p}$'s themselves of this new sampling can be expressed as linear combinations of the original oscillators through Eq.~\eqref{latticetransformation} yet will produce a model that behaves identically to the original field. As such, through its equivalence to a translationally invariant continuum model, we have shown that the discrete model for the quantum harmonic chain is entirely lattice independent. 

\subsection{Locality tradeoffs between discrete and continuous representations}
The relationship between the continuous and discrete equivalents for the bandlimited Klein--Gordon field and Harmonic chain are compared in Fig.~\ref{fig:Hamcompdiagram}.
\begin{figure}[t]
    \centering
    \includegraphics[width=\linewidth]{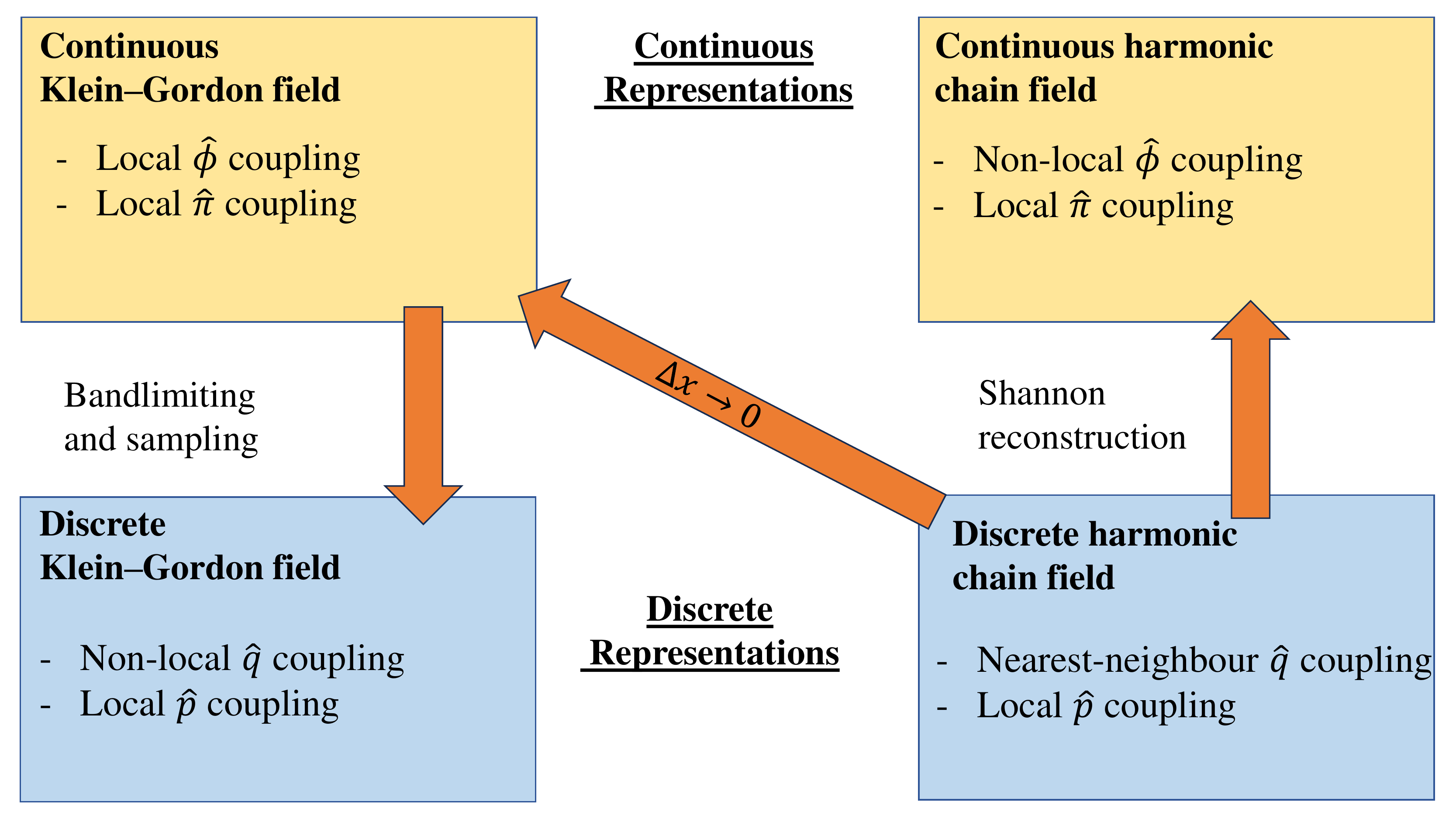}
    \caption{Diagram of relationship between the  Klein--Gordon and harmonic-chain Hamiltonians. Note that the boxes in orange are the continuous representations of these fields, while the boxes in blue are the corresponding discrete ones. The details in each box describe the type of interaction that the corresponding fields have regarding their locality. The top left, top right, bottom left, and bottom right boxes refer to Eqs. (\ref{KGHam1}), (\ref{BLHCHam}), (\ref{BLKGHam}), and (\ref{HCHam}), respectively. The orange arrows represent the action that can be taken to recover one field from another, such as taking the limit as the lattice spacing approaches zero---as represented by $\Delta x \rightarrow 0$---or using bandlimitation techniques such as sampling or Shannon reconstruction.}
    \label{fig:Hamcompdiagram}
\end{figure}

It can be seen that the different methods of producing continuous fields from the harmonic chain result in different Hamiltonians. That is, taking the limit for lattice spacing approaching zero is a very different approach to Shannon reconstruction and results in an entirely different field. This difference is mirrored in the fact that the discrete form of the bandlimited Klein--Gordon Hamiltonian in the bottom left corner of Fig.~\ref{fig:Hamcompdiagram} is very different to the Harmonic Chain Hamiltonian in the bottom right. Despite the Harmonic chain being the `traditional' discrete analog to the Klein--Gordon Field.

When comparing the discrete Hamiltonians from Fig (\ref{fig:Hamcompdiagram}), one can see that the key difference comes from the contributions of the position operators and how each $\hat{q}$ couples to others in the lattice. For the harmonic chain Hamiltonian, Eq.~\eqref{HCHam}, we can write the sum of the $\op q$ coupling terms as
\begin{align}
&
    \text{(Harmonic chain coupling terms)}
\nonumber \\*
&\quad\quad\quad
=
    \sum_{j\in\integers} 
    \frac{k}{2}(\op{q}_{j+1}- \op q_j)^2
\nonumber \\*
&\quad\quad\quad
=
    \sum_{j\in\integers}
    \frac{k}{2}
    (
    2\op q_j^2
    -
    \op q_j \op q_{j-1}
    -
    \op q_j \op q_{j+1}
    )
    ,
    \label{HCHamtransinv}
\end{align}
Graphically, the position coupling of the harmonic chain Hamiltonian can be visualized using Fig.~\ref{fig:HChainint}, where the field itself is modelled by a chain of balls and springs, with the positions of the balls denoted \(\op{q}_j\), and the interaction of each ball with its nearest neighbors contributes to the Hamiltonian of the system. We see that the interactions of each ball with its neighbors is local in the sense that no interactions exist beyond the nearest neighbors.
 
Comparatively, the position interaction terms for the bandlimited Klein--Gordon Hamiltonian are not nearest neighbor at all. Instead, they are infinitely nonlocal with alternating sign and a strength of each consecutive term 
that decays quadratically with distance. The coupling terms in the bandlimited Klein--Gordon Hamiltonian, Eq.~\eqref{BLKGHam}, are reproduced here for easy comparison with Eq.~\eqref{HCHamtransinv} above:
\begin{align}
&
    \text{(Klein--Gordon coupling terms)}
\\
&
=
    \frac{1}{2\Delta x}
    \sum_{j \in \integers}
    \left[
     \biggl(\frac{\pi^2}{3} + (\Delta x)^2m^2\right)\op q_j^2
     +
     \sum _{n\neq 0}\frac{2(-1)^n}{n^2}\op{q}_j \op{q}_{j+n}\biggl].
\nonumber
\end{align}
Here, the contribution to the Hamiltonian comes from the coupling of each \(\op{q}_j\) with every other \(\op{q}\) in the lattice. Notably, this is not just a nearest neighbor interaction, instead each point on the lattice is coupled to all others. A visualization of this interaction was shown in Fig.~\ref{fig:beans}.

\begin{figure}[t]
    \centering
    \includegraphics[width=\linewidth]{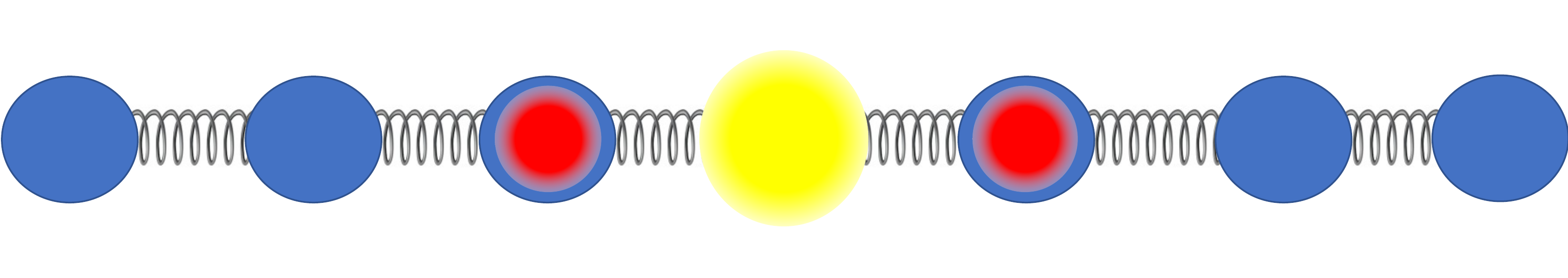}
    \caption{Graphical representation of coupling between an oscillator in a harmonic chain with its neighbours. Here, the coupling is between just \(q_j\) and its two neighbors, the sign and strength of each coupling and its contribution to the Hamiltonian is indicated by the colour and size of the glow as given by Eq. (\ref{HCHamtransinv}). A yellow glow indicates a positive contribution while a red glow indicates a negative contribution. A lack of any glow indicates no coupling between that oscillator and the center one taking place.}
    \label{fig:HChainint}
\end{figure}

 There is a stark difference between the position coupling in the bandlimited Klein--Gordon field and those in the harmonic chain. This difference comes from the fact that the two fields are discretized using completely different methods. Specifically, the harmonic chain is a field that is initially discrete, with a finite, real, length spacing between individual oscillators. The non-bandlimited Klein--Gordon field can be produced from the harmonic chain by taking the limit as the spacing between the oscillators approach zero and replacing mass with mass density. That being said, the relationship between the bandlimited Klein--Gordon field and its discrete representation is not one of taking the limit and changing the lattice spacing. Instead, it is one of restricting the density of information contained within the field \cite{UpdatedPye2015} in the sense that taking more measurements of the field than required by the Nyquist spacing does not provide any further information on the field configuration. Due to this restriction on information density, local operations on the continuous field that are more finely grained than the lattice spacing---such as the derivative---will be infinitely non-local on the lattice, as information from all other points on the lattice are required to describe continuous behavior between lattice points.

\section{Revealing continuous translation symmetry and conserved momentum within discrete models}
\label{sec:contsymm}

When comparing the continuous and discrete Hamiltonians of either the Klein--Gordon field or the harmonic chain, one can see that the continuous Hamiltonians possess fully continuous translational symmetry, while their discrete counterparts contain only a lattice symmetry at first glance. However, for bandlimited fields, the discrete is equivalent to the continuous, and thus the apparent loss (or gain) of symmetry when moving from the continuous representation to the discrete one (or the other way around) should be nothing more than an illusion. That is, these discrete fields actually contain full translational symmetry, and through Noether's theorem, there must be a conserved quantity related to said symmetry.

It is important to note here that many discrete fields have periodic boundary conditions on their Fourier transform. Changing these boundary conditions to Dirichlet boundary conditions, as is done when raising the discrete lattice to a continuum (see the discussion in Sec.~\ref{sec:bandlimitedKG}), is likely the underlying reason behind the presence of fully continuous translational invariance in these discrete fields.  For a quantum field, there must also be an operator relating to this conserved quantity that generates a translation along the symmetry. For a continuous field with full translational symmetry, this is the total momentum operator~$\op{P}$. A unitary operator in the form of an exponentiation of the total momentum operator generates a translation in the field. As such, under the assumption that a discrete field can be treated as some Shannon sampling of a continuous but bandlimited field, there must also be a total momentum operator that can generate an arbitrary \emph{continuous} translation in the discrete field---in a sense that we will clarify below. 

In order for an operator $\op{P}$ to enact horizontal translations on a field, the following conditions must be met:
\begin{subequations}
\label{eq:comconditions}
\begin{align}
    \left[\op{\phi}(x), \op{P}\right] &= 
    -i\valforhbar \partial_x \op{\phi}(x),\label{comcondition1}\\
    \left[\op{\pi}(x), \op{P}\right] &= 
    -i\valforhbar\partial_x \op{\pi}(x),\label{comcondition2}
\end{align}
\end{subequations}
where $\op{P}$ is the total momentum \cite{UpdatedPeskin2018}.
For a continuous field, the operator that satisfies these conditions is the total momentum operator. As such, the most likely candidate for an operator that satisfies Eqs.~(\ref{eq:comconditions}) for the discrete form of a bandlimited field would be the discrete form of the bandlimited total momentum. Here we introduce said operator and check if its discrete form satisfies the above conditions.

The total momentum operator for a quantum field in 1+1 dimensions is given in the form~\cite[Eq.~(2.33)]{UpdatedPeskin2018}\blk
\begin{align}
    \op{P} \coloneqq -\int_\mathbb{R}dx\, \op{\pi}(x)\partial_x\op{\phi}(x).\label{TotPdef}
\end{align}
Here, we will explore the discrete form of this operator when  $\op{\phi}$ and $\op{\pi}$ are bandlimited such that they are given by Eqs.~(\ref{phiBL}) and (\ref{piBL}), respectively. Additionally, given that the fields are bandlimited, the spatial derivative in Eq.~(\ref{TotPdef}) can also be replaced with the bandlimited spatial derivative. That is,
\begin{align}
     \op{P}
     &=
     -\int_\mathbb{R}dx\, \op{\pi}(x)\partialBL_x\op{\phi}(x),\nonumber
\\
     &=
     -\Delta x \sum_{i\in\integers} \op{\pi}(x_i)\frac{-1}{\Delta x}\sum_{n\neq0}\frac{(-e^{\Delta x \partial_x})^n}{n}\op{\phi}(x_i)\nonumber
\\
     &=
     \frac{1}{\Delta x}\sum_{i\in\integers} \sum_{n\neq 0}\frac{(-1)^n}{n}\op{p}_i \op{q}_{i+n}\nonumber
\\
    &=
     -
     \sum_{i,j \in \integers}
     D_{ij}
     \op{p}_i \op{q}_{j}
     \label{BLtotmomentum}
     ,
\end{align}
where $D_{ij}$ are elements of the derivative matrix $\mat{D}$ given in Eq.~\eqref{derivativematrixdef}. 
By using the  matrix elements $D_{ij}$ to describe $\op{P}$, we can write the bandlimited total momentum operator in terms of vectors $\opvec{p}$ and $\opvec{q}$ that are defined as column vectors with operator elements $\op{p}_j$ and $\op{q}_j$, respectively. We have
\begin{align}
    \label{matrixvectorPdef}
    \op{P} 
    =
    -\opvec{p}^\tp\mat{D}\opvec{q},
\end{align}
where $^\tp$ indicates a row vector of operators instead of a column vector. While this form of $\op{P}$ will not be used explicitly in this work, it may be of use in the future.

Now, one can check that $\op{P}$ commutes with the Hamiltonians of both the harmonic chain and the Klein--Gordon field in their lattice representations---i.e.,
\begin{align}
    \left[\op{H}_{KG},\op{P}\right] &=0\label{PCOMKG},\\
    \left[\op{H}_{HC}, \op{P}\right] &=0\label{PCOMHC}.
\end{align}
In fact, this is just a special case of the following theorem, which generalizes this commutation to all fields with quadratic Hamiltonians.

\begin{theorem}
\label{thm:Hquad}
Let $\op H_{\mathrm{quad}}$ be a discrete-translationally invariant, quadratic Hamiltonian on a one-dimensional lattice of oscillators. That is,
\begin{align}
\label{eq:Hquad}
    \op H_{\mathrm{quad}}
&
=
    \sum_{j \in \integers} \op h_j
    ,
\end{align}
where
\begin{align}
\label{eq:Hquadoneterm}
    \op h_j
&=
    \sum_{n \in \integers}
    \left(
    c^{(qq)}_n \op q_j \op q_{j+n}
    +
    c^{(qp)}_n \op q_j \op p_{j+n}
    +
    c^{(pp)}_n \op p_j \op p_{j+n}
    \right)
    .
\end{align}
In such a system, the bandlimited total momentum operator~$\op{P}$, Eq.~\eqref{BLtotmomentum}, is conserved:
\begin{align}
    [\op{H}_{\mathrm{quad}}, \op{P}]=0.
\end{align}
\end{theorem}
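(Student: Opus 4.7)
The plan is to exhibit $\op{P}$ as the infinitesimal generator of continuous spatial translations on the lattice operators and then to use the discrete translation invariance of $\op{H}_{\mathrm{quad}}$, together with the Toeplitz antisymmetry of the bandlimited derivative matrix $\mat{D}$, to cancel the commutator term by term. As a first step, I would insert the canonical commutation relations~\eqref{disccom} into the definition of $\op{P}$ in Eq.~\eqref{BLtotmomentum} to obtain
\begin{align}
[\op{q}_j, \op{P}] &= -i\sum_{k} D_{jk}\op{q}_k = -i\,\op{q}'_j, \nonumber \\
[\op{p}_j, \op{P}] &= \phantom{-}i\sum_{k} D_{kj}\op{p}_k = -i\,\op{p}'_j, \nonumber
\end{align}
where the second line uses $D_{ab} = -D_{ba}$ to match the derivative definitions in Eq.~\eqref{qpprimedefs}. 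These are the lattice analogs of the generator conditions~\eqref{comcondition1}--\eqref{comcondition2}.

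Next, I would apply the Leibniz rule to each bilinear appearing in $\op{h}_j$. The $qq$ coupling yields $[\op{q}_j\op{q}_{j+n},\op{P}] = -i(\op{q}'_j\op{q}_{j+n} + \op{q}_j\op{q}'_{j+n})$, with analogous expressions for the $qp$ and $pp$ cases. After summing over $j\in\integers$ and expanding $\op{q}'_a = \sum_b d_{a-b}\op{q}_b$ (writing $D_{ab} = d_{a-b}$), I would re-index one of the two Leibniz pieces by the shift $j \mapsto j-n$ so that both pieces are expressed with the same pair of summation indices on the bilinear. The two terms then acquire derivative-matrix coefficients $d_{a+n-b}$ and $d_{b-n-a}$, respectively, which are negatives of each other by the antisymmetry $d_m = -d_{-m}$, so they cancel. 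Exactly the same cancellation works for the $pp$ and $qp$ cases; crucially, the fact that the coupling constants $c^{(\cdot)}_n$ do not depend on $j$---i.e., discrete translation invariance of $\op{H}_{\mathrm{quad}}$---is what permits the shift of summation index.

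The main obstacle is bookkeeping rather than anything conceptual: I must manipulate bi-infinite sums and carefully track operator orderings, though the latter simplifies substantially because $\op{q}$'s (respectively $\op{p}$'s) mutually commute and the Leibniz rule preserves left/right placement. A conceptually slicker alternative would be to lift each sum $\sum_j$ to a continuum integral via the Parseval identity~\eqref{eq:BLintdef}, writing for example $\sum_j \op{q}_j\op{q}_{j+n} = (\Delta x)^{-1}\int \op{\phi}(x)\op{\phi}(x+n\Delta x)\,dx$, which is manifestly invariant under continuous spatial translation $x\mapsto x+a$. Since $\op{P}$ generates precisely this translation on bandlimited fields, its commutator with each such integral vanishes, yielding the result.
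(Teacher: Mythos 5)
Your proposal is correct and follows essentially the same route as the paper's own proof in Appendix~\ref{app:proofofHquad}: compute $[\op q_j,\op P]=-i\op q'_j$ and $[\op p_j,\op P]=-i\op p'_j$, apply the Leibniz rule to each bilinear, and cancel the two resulting pieces by reindexing the $j$ sum and invoking the Toeplitz antisymmetry $D_{ab}=-D_{ba}$, with discrete translation invariance justifying the shift. Your alternative continuum argument via Eq.~\eqref{eq:BLintdef} and the translation action of $\op P$ is also sound and consistent with how the paper uses those identities elsewhere.
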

\begin{proof}
The proof of this theorem---and thus, by extension, Eqs.~(\ref{PCOMKG}) and~(\ref{PCOMHC})---can be found in Appendix~\ref{app:proofofHquad}.
\end{proof}

It is important to note that higher powers of the field operators in a Hamiltonian will not generally commute with $\op{P}$. One can see this immediately by checking that
\begin{align}
    \sum_{j\in\integers}\left[\op{q}_j^3, \op{P}\right]\neq 0,
\end{align}
a proof of which can also be found in Appendix~\ref{app:proofofHquad}. Thus, Hamiltonians with terms beyond quadratic order will not always%
\footnote{While interacting theories on lattices with local interaction terms such as $\op{q}^4$ do not possess continuous translational invariance. It is possible to start with a continuous interacting theory and sample this field onto a lattice using Shannon theory. This field will, by construction, possess continuous translational symmetry in its discrete representation, at the cost of being nonlocal. Given this, it may be possible to engineer non-local interacting lattice theories that are continuously translationally invariant.
We discuss this further in section \ref{sec:outlook}.}
possess continuous translation symmetry.

\subsection{Interpretation of continuous translation symmetry in a discrete system}

We are still left with the challenge of interpreting a continuous translation in a continuous system. To do this, we will use the discrete form of~$\op P$, \eqref{BLtotmomentum}, and show how to interpret the action it generates on a lattice field configuration.

Commuting $\op P$ with with the discrete representation of the field operators $\op{q}$ and $\op{p}$ at a point on the lattice gives their (bandlimited) spatial derivative:
\begin{subequations}
\label{qpcommutators}
\begin{align}
[\op{q}_j,\op{P}]
&=
    -
     \sum_{k,l \in \integers}
     D_{kl}
     \underbrace{
     [\op{q}_j,\op{p}_k]
     }
     _{i \delta_{jk}}
     \op{q}_{l}
=
    -i
    \sum_{l \in \integers}
    D_{jl}
    \op{q}_{l}
=
    -i
    \op q'_j
    ,
\\
   [\op{p}_j,\op{P}]
&=
    -
    \sum_{k,l \in \integers}
    D_{kl}
    \op{p}_k
    \underbrace{
    [\op{p}_j,\op{q}_{l}]
    }
     _{-i \delta_{jl}}
=
    -i
    \sum_{l \in \integers}
    D_{jl}
    \op{p}_l
=
    -i
    \op p'_j
    ,
\end{align}
\end{subequations}
where $\op{q}'_j$ and $\op{p}'_j$ are defined in Eq.~\eqref{qpprimedefs}.

Collecting these into operator-valued column vectors, we have
\begin{subequations}
    \begin{align}
        \left[\opvec{q}, \op{P}\right]
        &=
        -i\mat{D}\opvec{q}
        =
        -i\opvec{q}',
    \\
        \left[\opvec{p}, \op{P}\right]
        &= 
        -i\mat{D}\opvec{p}
        =
        -i\opvec{p}'.
    \end{align}
\end{subequations}
Now, we replace $\op{q}$ and $\op{p}$ (as well as $\op{q}'$ and $\op{p}'$) in Eqs.~\eqref{qpcommutators} with their definitions as samples of continuous fields $\op{\phi}$ and $\op{\pi}$ from Eq.~\eqref{qpdef} to write the commutator of $\op{P}$ with the continuous fields:
\begin{subequations}
    \label{phipicommutator}
    \begin{align}
        \left[\op{\phi}(x_j), \op{P}\right]
        &=
        -i\partialBL 
        \op{\phi}(x)
        \Bigr|_{x=x_j},
    \\
        \left[\op{\pi}(x_j), \op{P}\right]
        &=
        -i\partialBL
        \op{\pi}(x)
        \Bigr|_{x=x_j}.
    \end{align}
\end{subequations}

As $\op{P}$ is independent of choice of lattice so long as the spacing is $\frac{\pi}{\Delta x}$ or denser, Eq.~(\ref{phipicommutator}) can be extended to
\begin{subequations}
\label{eq:actionofP}
\begin{align}
    \left[\op{\phi}(x), \op{P}\right] &= -i\partialBL_x \op{\phi}(x),\\
    \left[\op{\pi}(x), \op{P}\right] &= -i\partialBL_x \op{\pi}(x).
\end{align}
\end{subequations}
As such, since we can replace $\partialBL_x \to \partial_x$ for bandlimited fields, we can see that the discrete form of the bandlimited total momentum operator does indeed satisfy the conditions for generating translations, Eqs.~\eqref{eq:comconditions}. 
Additionally, we can see that the commutator of $\op{P}$ with either the discrete or continuous representation of the bandlimited field will yield the derivative of the field. 
From Eq.~(\ref{eq:actionofP}), it follows that the exponentiation of $\op{P}$ acting on $\op{\phi}$ as a unitary operator in the Heisenberg picture generates a horizontal translation. That is, 
\begin{align}
\label{eq:HeisactionofP}
    e^{
    i
    \op{P}a}\op{\phi}(x)
    e^{
    -i
    \op{P}a} = \op{\phi}(x-a).
\end{align}
\blk
where $a$ is an arbitrary distance. When $a>0$, this action shifts the field configuration to the right by~$a$. Importantly, this arbitrary translation is possible in both the continuous and discrete representations of $\op{\phi}$, meaning that one can take discrete fields on a lattice and continuously translate them by an arbitrary distance. We define the operator that generates these translations as 
\begin{align}
    \op{U}(a) \coloneqq 
    e^{
    -i
    \op{P}a},\label{dispOP}
\end{align}
such that
\begin{align}
    \op{U}(a)^\dagger\op{\phi}(x)\op{U}(a) = \op{\phi}(x-a).
\end{align}
When considering the effect of $\op{U}$ in a continuous sense, one need only think of a usual translation along~$x$. From the discrete point of view, however, it is a little more complicated. The effect of $\op{U}$ on a discrete field on a lattice is the equivalent of doing Shannon reconstruction of the continuous field using the lattice values, doing a continuous shift on the reconstruction, and then resampling the translated field on the original  lattice. 

Such a shift has in the discrete representation has the form 
\begin{align}
    e^{
    i
    \op{P}a}\op{q}_k(x)e^{
    -i
    \op{P}a} &= \op{\phi}(x_k-a)\eqqcolon \op{q}^{(a)}_k,
\end{align}
which can be expressed in terms of the original discrete field as
\begin{align}
    \op{q}^{(a)}_k=\op{\phi}(x_k-a) = \sum_{j\in\integers}\op{q}_j\sinc_\pi\left(\frac{x_k-x_j-a}{\Delta x}\right)\label{shiftedlatticefield}.
\end{align}
This shifted discrete field~$\op{q}^{(a)}_k$ can be interpreted as a sample of the translated continuous field. Figure~\ref{fig:shifteddiscretefields} shows a graphical representation of this process.

\begin{figure}[t]
    \centering
    \includegraphics[width=\linewidth]{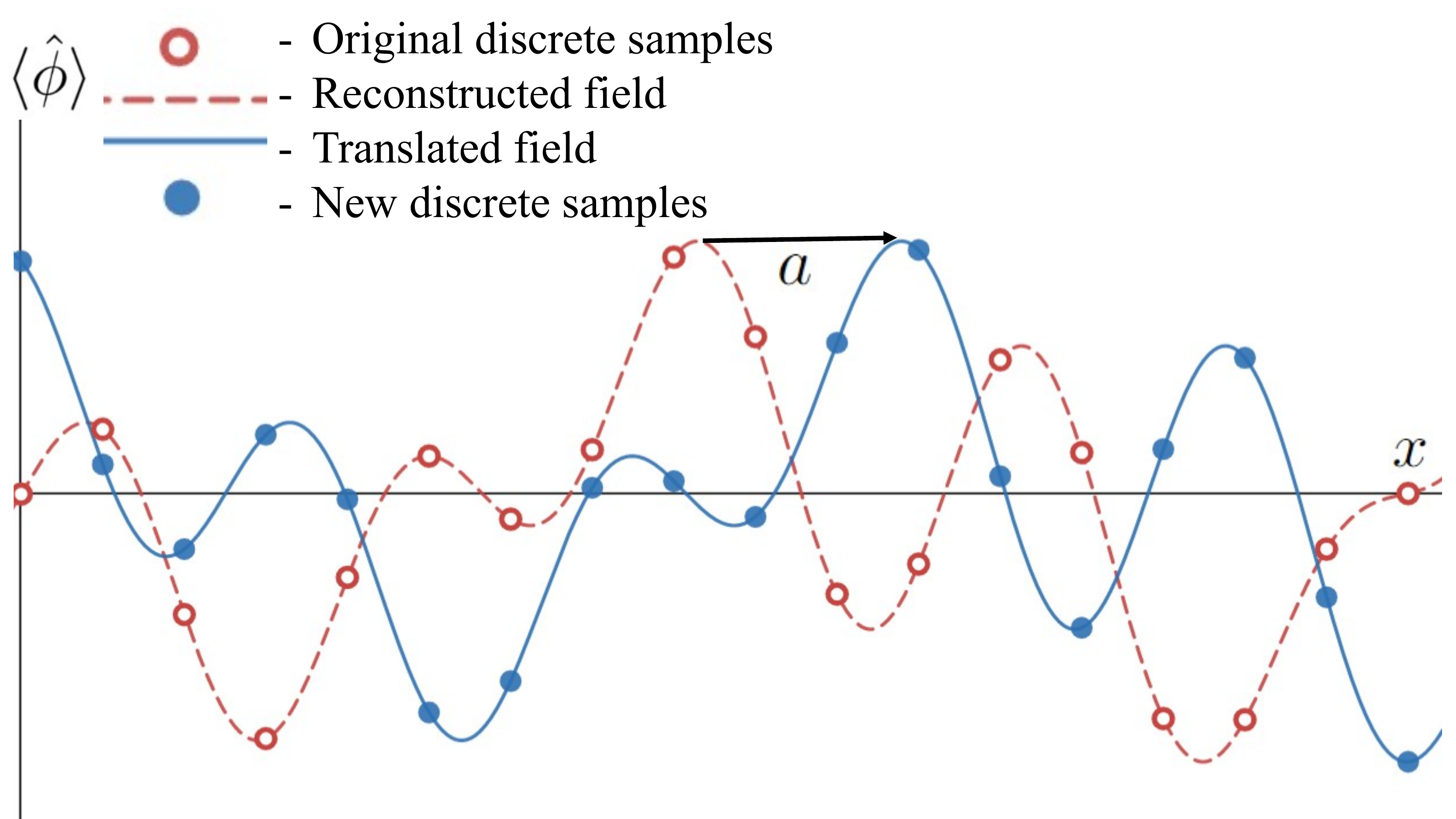}
    \caption{Graphical representation of a lattice field undergoing continuous translations. The initial lattice field operators $\op{q}_j$ (red circles) are lifted to a continuous field $\hat{\phi}(x)$ (dashed line) through Shannon reconstruction. This field is then continuously translated to the right by an arbitrary positive distance~$a$. This translated field $\hat{\phi}(x-a)$ (solid blue line) is then sampled back down onto the original Nyquist lattice  to give new samples $\hat{q}_j$ (blue dots). Note that the plots here represent the expectation value of the field at position $x$, as opposed to being plots of the field itself.}
    \label{fig:shifteddiscretefields}
\end{figure}

The samples $\op{q}^{(a)}_k$ of the translated field, taken on the original lattice, can also be used to reconstruct said continuous field using Shannon reconstruction: 
\begin{align}
    \op{\phi}^{(a)}(x) = \op{\phi}(x-a) = \sum_{k\in\integers}\op{q}^{(a)}_k\sinc_\pi\left(\frac{x-x_k}{\Delta x}\right).
\end{align}

To restate this, the displacement operator defined in Eq.~(\ref{dispOP}) allows us to shift the continuous object under the lattice of field values across by an arbitrary distance, affecting the value that the field samples take at those lattice points. This is analogous to shifting the position of a continuous field by an arbitrary distance.

Recall that $\op{P}$ commutes only with quadratic Hamiltonians on a lattice and generally fails to commute when the polynomial order of any term in the Hamiltonian is greater than two. This means that lattice QFTs described by discrete-translationally invariant Hamiltonians of order higher than quadratic do not necessarily possess continuous translational invariance. One may observe this intuitively from Eq.~(\ref{eq:BLintdef}), which shows that for sums or integrals of quadratic powers of bandlimited functions or fields, their discrete and continuous representations can be easily interchanged. 
However, Eq. \eqref{eq:BLintdef} does not apply to polynomial products of fields greater than two, meaning that while higher order polynomials of bandlimited fields still have equivalent discrete representations---by direct substitution of the Shannon sapling theorem---and vice versa, lattice translational invariance in such discrete fields cannot necessarily be promoted to continuous translational invariance.

It is still possible to start with a continuous translationally invariant interacting field and sample down to a lattice field that possesses this continuous translational invariance, by virtue the the lattice counterpart still being equivalent to a continuous theory. 

To recap, we have shown that a discrete-translationally invariant \emph{discrete} field with only quadratic terms in its Hamiltonian possesses continuous translational symmetry when it is treated as a sampling of a continuous, bandlimited field.

\section{ Discussion and outlook \label{sec:outlook}}
Throughout this paper, it has been evident that the transition from a bandlimited continuous theory to a discrete one (or vice versa) affects a few concepts of quantum field theory in interesting ways. Two key examples are locality (particularly the locality of the derivative and the quasi-locality of nearest neighbor interactions) and translational symmetry.

For the case of the bandlimited derivative when treated in the continuous sense, the derivative is exactly equivalent to the standard derivative when it is acting on a function that is bandlimited with the appropriate UV cutoff. However, when working with the discrete form of the bandlimited derivative acting on either a bandlimited continuous or discrete function or field, at first glance, the derivative appears to act very differently to its continuous counterpart due to the appearance of an infinite sum over all values on the lattice (analogous to an infinitely distributed weighted average). Yet, these two approaches must be exactly equivalent so long as their UV cutoffs are compatible. That is, the discrete form of the bandlimited derivative as an infinite weighted sum can act on the continuous form of a bandlimited function or field and produce the same result as one would get from taking the derivative of said function normally.

The bandlimited derivative itself, while determined through an information theoretic approach, is equivalent in form to the SLAC derivative used in lattice QCD \cite{UpdatedDrell1974} and the infinite order stencil approach to finite difference derivative approximations \cite{UpdatedFornberg2022}. However, from the information theoretic approach we can see that the bandlimited derivative (and by extension its SLAC and stencil equivalents) is entirely lattice independent (through the free choice of $b$ in Eq.~\eqref{xjdef} and can act on either continuous or discrete functions provided the samples of said functions are sufficiently dense. Additionally, we see that when acting on functions that do possess an inherent ultraviolet cut-off, the SLAC and stencil derivatives cease to be approximations.

If the discrete bandlimited derivative acts on a bandlimited field with UV cutoff less than the one on the derivative---i.e.,~the lattice spacing used for the discrete derivative is smaller than the Nyquist spacing of the bandlimited field, then the derivative will be oversampled and still be equivalent to the standard continuous derivative. However, if the lattice spacing for the bandlimited derivative is greater than the Nyquist spacing or if the bandlimited derivative is acting on a field that is not bandlimited, then the bandlimited derivative will only give an approximation of the standard derivative, even if infinite samples are taken. That being said, it would be interesting to investigate the precision with which the derivatives of bandlimited functions and fields can be approximated using Shannon reconstruction and the bandlimited derivative using only a finite number of samples.

It is particularly interesting that several different approaches to determining a discrete representation of the derivative have resulted in the same form. The fact that this derivative can act on both discrete and continuous functions, maintains lattice independence, and is an exact equivalent to the derivative when acting on bandlimited functions is an indication of the value that this derivative may have as an analytical and computational tool that is more powerful than was previously believed. Additionally, we show in appendix \ref{app:BLsecondderivs} that compositing the bandlimited derivative map twice produces the second bandlimited derivative map. This is an indication that there may be more to the notion of bandlimited calculus than just an approximation (or discrete equivalent) to the derivative. Work is being done to generalise this composition to higher order derivatives and to other rules of derivative calculus such as the product rule. The toolbox of bandlimited calculus may well acquire more and more discrete equivalents to the conventional methods of differential and integral calculus.

Furthermore, the continuous-discrete equivalence of the bandlimited derivative---along with the continuous-discrete equivalences of the studied quantum field theories---may also be connected to perfect lattice actions studied in the 1990s \cite{BIETENHOLZ1997, Hasenfratz1994}, where non-local lattice operations are used to approximate or replicate continuous physics results in quantum field theory and high energy physics. It is an open question as to what precisely the connection is between these perfect lattice actions and bandlimitation and how these two approaches to a continuous-discrete equivalence relate. Such an investigation is beyond the scope of this work but is an interesting open question that is worth further study in followup work.

For the case of the restoration of translational symmetry when moving from a discrete field to a bandlimited continuous one, one must ask the question: Do discrete fields always have full translational symmetry that we have not previously observed, or does treating a discrete field as a sampling from a continuous but bandlimited one generate this translational symmetry? This is an interesting question that leaves room for some interpretation. 

While it is always possible to consider a lattice field as a set of samples of some continuous but bandlimited field, it is not necessarily true that every such continuous field is translationally invariant, even if the field is translationally invariant on a lattice. For instance, we have shown that momentum is not always conserved for local interacting fields on a lattice and, as such, their associated continuous fields will not possess continuous translational invariance. 
An interesting place to explore further details on this may come from the work of Pye~\cite{UpdatedPye2015} and Grimmer~\cite{UpdatedGrimmer2022} and the notion of the inability to exactly localize a lattice point as a sample from a bandlimited field. With that being said, it would be interesting to investigate the effects that this notion of full translational symmetry on lattices would have on gauge theories as well as interacting field theories.

It will be an interesting challenge to try to define an interacting field on a lattice that \emph{does} possess fully continuous translational invariance. However, such interactions would likely be highly nonlocal on the lattice. We know that interacting lattice theories with continuous translation symmetry \emph{can} exist as we can start with a continuous interacting theory and use Shannon sampling to produce its equivalent lattice theory. As such, through the study of the forms of such theories on a lattice, it may be possible to engineer an interacting lattice theory that generally possesses continuous translation symmetry. Ongoing work is being done in this regard to further improve lattice models that replicate continuous theories with interactions. 
We also note that while quadratic fields in flat spacetime do not interact, it is possible for such fields to interact with gravity in curved spacetime \cite{UpdatedHollands2015, UpdatedParker2009}. As such, it would be an interesting avenue for future work to investigate the possible continuous translation symmetry of quadratic discrete field on curved spaces, while acknowledging the limitations of this approach since our bandlimitation is non-covariant.

The bandlimited total momentum operator, being a generator of arbitrary translations of a lattice field theory, may have applications in quantum information in determining the group velocity and hitting time of quantum random walks, extending upon work by Kempf and Portugal~\cite{PhysRevA.79.052317}. Extending the notion of continuous symmetry to discrete fields in higher than $1+1$ dimensions is also an interesting avenue to explore. It seems unlikely that restoring continuous rotational symmetry to a discrete field would work similarly to the case of translational symmetry, but it would be interesting to explore whether it can be done.

\begin{acknowledgements}
We thank Jason Pye, Nicholas Funai, Daniel Grimmer, and Julian Greentree for their support and expertise on this work. This work was supported by the Australian Research Council Centre of Excellence for Quantum Computation and Communication Technology (Project No.~CE170100012), the Australian Research Council Discovery Program (Project No.~DP200102152), and a Discovery Grant from the National Science and Engineering Research Council (NSERC) of Canada. 
\end{acknowledgements}

\appendix
\section{Simple proof of the Shannon sampling theorem}\label{app:shannonproof}
This is a simplified adaptation of Shannon's sampling theorem~\cite{UpdatedShannon1948} with an included proof following along the lines of Ref.~\cite{1455040}.  \blk
\begin{theorem}
Let $f$ be a continuous bandlimited function $f : \reals \to \complex$ with finite, open spectral support $(-\Omega, \Omega)$, where $\Omega$ is the bandlimit, i.e.,~the UV cutoff. Then, $f$~can be perfectly reconstructed from a lattice of samples using \blk
\begin{align}
    f(x) = \sum_{j\in\integers}f(x_j) \sinc_\pi\left(\frac{x-x_j}{\Delta x}\right),
\end{align}
where $\Delta x = \frac{\pi}{\Omega}$.
\end{theorem}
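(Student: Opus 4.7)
The plan is to reduce the reconstruction formula to the standard Fourier-series expansion of $\tilde f$ on the compact interval $[-\Omega,\Omega]$, identify the series coefficients with the lattice samples $f(x_j)$, and then resum after substituting back into the inverse Fourier transform \eqref{eq:BLfwithD}.

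First, because $\tilde f$ is compactly supported in $(-\Omega,\Omega)$, view it as an element of $L^2([-\Omega,\Omega])$. The family $\{e^{ix_j k}\}_{j\in\integers}$ with $x_j = j\Delta x + b$ is an orthogonal basis of this Hilbert space: a direct integration gives $\int_{-\Omega}^{\Omega}e^{i(x_j-x_{j'})k}\,dk = 2\Omega\,\delta_{jj'}$, since $(x_j-x_{j'})\Omega = (j-j')\pi$ kills all off-diagonal terms and the overall $b$-dependent phase $e^{ibk}$ acts as a unitary multiplier that preserves orthogonality. Expand $\tilde f(k) = \sum_j c_j e^{ix_j k}$ in this basis and compute the coefficients by orthogonality:
\begin{align*}
c_j = \frac{1}{2\Omega}\int_{-\Omega}^{\Omega}\tilde f(k)\,e^{-ikx_j}\,dk = \Delta x\, f(x_j),
\end{align*}
where the last equality simply re-expresses \eqref{eq:BLfwithD} evaluated at $x=x_j$ together with $\Delta x = \pi/\Omega$.

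Second, substitute this expansion back into \eqref{eq:BLfwithD} and interchange the sum and integral, which is legitimate by $L^2$ convergence of the Fourier series on a compact domain:
\begin{align*}
f(x) = \sum_{j\in\integers}\Delta x\, f(x_j)\int_{-\Omega}^{\Omega}\frac{dk}{2\pi}\,e^{-ik(x-x_j)}.
\end{align*}
Evaluating the remaining integral gives $\sin(\Omega(x-x_j))/[\pi(x-x_j)]$, which with $\Omega = \pi/\Delta x$ equals $(\Delta x)^{-1}\sinc_\pi((x-x_j)/\Delta x)$; this cancels the prefactor and produces the stated reconstruction formula.

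The main subtlety is conceptual rather than computational: one must be clear about why the spectral support is taken to be an \emph{open} interval. It is precisely this openness that makes $\{e^{ix_j k}\}$ a complete orthogonal basis of $L^2([-\Omega,\Omega])$ independently of the lattice offset $b$, so that the same Shannon series reproduces $f$ from samples on \emph{any} Nyquist-spaced lattice. If $\tilde f$ were permitted to carry mass at the endpoints $\pm\Omega$---as for $f(x) = \sin(\Omega x)$---the Fourier series could fail to detect such content for certain choices of $b$, breaking the claimed lattice independence exactly as flagged in the footnote following \eqref{eq:BLfwithD}. Beyond this boundary-regularity point, the only technical step is the sum/integral interchange, which is immediate on the compact spectral domain.
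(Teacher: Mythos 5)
Your proof is correct and follows essentially the same route as the paper's own proof in Appendix~\ref{app:shannonproof}: expand $\tilde f$ in a Fourier series on the compact spectral interval, identify the coefficients with the lattice samples via the inverse transform, and resum the resulting integrals into $\sinc_\pi$ kernels. The only differences are presentational refinements---you carry the general lattice offset $b$ explicitly (the paper sets $b=0$) and add a useful remark on why the open spectral support guarantees lattice independence---neither of which changes the underlying argument.
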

\begin{proof}

We start with a bandlimited function $f$ with a UV cutoff $\Omega$: 
\begin{align}
 f(x) = \int_{-\Omega}^\Omega \frac{dk}{2\pi}\tilde{f}(k)e^{ikx}.
\end{align}
We can rewrite $\tilde{f}(k)$ using its Fourier series.  
\begin{align}
 \tilde{f}(k) &= \sum_{j\in\integers} c_j e^{\frac{2\pi i jk}{2\Omega}} \label{fourierseries1}\\
 c_j &= \frac{1}{2\Omega}\int_{-\Omega}^\Omega dk \tilde{f}(k)e^{-\frac{2\pi i jk}{2\Omega}}. \label{coefficient1}
\end{align}
Now, noting that $\frac{\pi}{\Omega} = \Delta x$, we can rewrite Eq.~(\ref{coefficient1}):
\begin{align}
 c_j &= \frac{1}{2\Omega} \int_{-\Omega}^\Omega dk \tilde{f}(k) e^{-ikj\Delta x}\nonumber\\
 &= \frac{\pi}{\Omega}f(j\Delta x). 
\end{align}
Combining these equations, we can write $f(x)$ as
\begin{align}
 f(x) &= \frac{\pi}{\Omega}\int_{-\Omega}^\Omega \frac{dk}{2\pi}\sum_{j\in\integers} f(j\Delta x)e^{ik(x- j\Delta x) }\nonumber\\
 &= \frac{\pi}{\Omega}\sum_{j\in\integers} f(j\Delta x)\int_{-\Omega}^\Omega \frac{dk}{2\pi}e^{ik(x-j\Delta x)}\nonumber\\
 &= \frac{\pi}{\Omega} \sum_{j\in\integers} f(j\Delta x) \frac{\sin[\Omega (x-j\Delta x)]}{\pi(x-j\Delta x)}\nonumber\\
 &= \sum_{j\in\integers} f(j\Delta x) \sinc_\pi \left(\frac{x-j\Delta x}{\Delta x}\right).\label{shannonsamplingproof}
\end{align}
Now, noting that $j\Delta x$ is simply the $j^{th}$ point on the lattice $x_j$, we can rewrite Eq.~(\ref{shannonsamplingproof}) in the form used throughout this work:
\begin{align}
 f(x) = \sum_{j\in\integers} f(x_j)\sinc_\pi\left(\frac{x-x_j}{\Delta x}\right).
\end{align}
\end{proof}

\section{First derivatives of bandlimited functions}\label{app:BLderivs}

A bandlimited function can be equivalently described as  both continuous and discrete, and as such it follows that many continuous operations that can be done on bandlimited functions must also have discrete equivalents. 

Here we will introduce the discrete form of the derivative of bandlimited functions. Note that we simply call this the \emph{bandlimited derivative} throughout this work for brevity. \blk

The derivative of a bandlimited function can be found simply by taking the derivative of the Shannon reconstruction formula:%
\footnote{We note for completeness that the sinc function is closely related to the spherical Bessel functions~$j_n(x)$. In particular, $\sin(x) = j_0(x)$, and derivatives of sinc are related to those of the $j_n$s, although not in an overly simplistic way. In fact, $j_n(x) = (-x)^n (x^{-1} \partial_x)^n \sinc(x)$, so that $\sinc(x) = j_0(x)$, $\sinc'(x) = -j_1(x)$, but the simple pattern then breaks with $\sinc''(x) = j_2(x) - x^{-1} j_1(x)$. However, for some applications this relationship may be helpful, so we include it for reference. We choose not to use it here because it does not assist in simplifying the expressions to any meaningful extent that we can see.}
\begin{align}
\label{eq:pdfx}
    \partial_xf(x) &= \sum_{k\in\integers} f(x_k)  \partial_x \blk \sinc_\pi \left(\frac{x-x_k}{\Delta x}\right),\nonumber\\
    &=  \sum_{k\in\integers} f(x_k) \frac{1}{x-x_k}
    \nonumber \\
    &\quad \times
    \left[\cos \left(\frac{\pi}{\Delta x}(x-x_k)\right)-\sinc_\pi \left(\frac{x-x_k}{\Delta x}\right)\right].
\end{align}
When evaluated at a point $x_j$ on the same lattice as $x_k$ this reduces to
\begin{align}
    \partial_xf(x)\Bigg|_{x=x_j} 
    &= \sum_{k\in\integers} f(x_k) \frac{1}{\Delta x} \frac{(-1)^{j-k} - \delta_{jk}}{j-k}.\label{BLderivative}
\end{align}
The fraction in Eq.~(\ref{BLderivative}) can be shown to equal zero when $i=j$ by treating $x_k$ and $x_j$ as real numbers in Eq.~\eqref{eq:pdfx} and taking the limit of $x_k \rightarrow x_j$. As such, we can write the bandlimited derivative in the form
\begin{align}
    f'(x_j) = \partial_xf(x)\Bigg|_{x=x_j} &= \sum_{k\in\integers} {D_{jk}}f(x_k), \label{BLderivative2}
\end{align}
where ${ D_{jk}}$ is given by 
\begin{align}
    {D_{jk}} 
    &=
        \begin{cases}
        \displaystyle
        \frac{1}{\Delta x} \frac{(-1)^{j-k}}{j-k} &\quad \text{if $j\neq k$},\\
        \displaystyle\vphantom{\frac{0}{0}}0 &\quad \text{if $j=k$}, 
        \end{cases}
 \label{eq:Dmatrix}
\end{align}
as defined in Eq.~\eqref{derivativematrixdef}. 

 Since $D_{jk}$ are elements of a Toeplitz matrix, as mentioned in section \ref{sec:bandlimitedKG}, 
 we can offer some additional intuition about the bandlimited derivative as follows. By rewriting Eq.~(\ref{BLderivative}) and reindexing the sum such that $k-j = m$, we have
\begin{align}
    \partial_xf(x)\bigg|_{x=x_j} &= \frac{-1}{\Delta x} \sum_{m\neq0}\frac{(-1)^m}{m}f(x_{j+m}),\nonumber\\
    &= \frac{-1}{\Delta x} \sum_{m\neq0}\frac{(-1)^m}{m}f(x_j + m\Delta x), \label{BLderivative3}
\end{align}
which agrees with Eq.~\eqref{eq:BLderiveval}.
Hence, the bandlimited derivative can also be viewed as a weighted sum of all points on the lattice except the point at which the bandlimited derivative is being taken. The fall-off in contribution is proportional to $m^{-1}$, where $m$ is the number of lattice steps away from the point in question.

It is interesting to consider this interpreation in contrast with the usual, finitely extended, discrete approximations to continuous derivates. The forward difference is simply $(\Delta x)^{-1}[f(x_{j+1}) - f(x_j)]$, the backwawrd difference is similarly, $(\Delta x)^{-1}[f(x_{j}) - f(x_{j-1})]$, yet the average of these gives a better estimate, the centered difference $(2\Delta x)^{-1}[f(x_{j+1}) - f(x_{j-1})]$. In fact, the bandlimited derivative results from the infinite limit of higher orders of finite-difference approximations \cite{UpdatedFornberg2022}.

 Using Eq.~\eqref{eq:transop}, we can express Eq.~\eqref{BLderivative3} using the translation operator~$e^{a\partial_x}$: 
\begin{align}
    \partial_xf(x)\bigg|_{x=x_j} &= \frac{-1}{\Delta x}\sum_{m\neq0}\frac{(-1)^m}{m}e^{m\Delta x \partial_x}f(x)\bigg|_{x=x_j},\nonumber\\
    &= \frac{-1}{\Delta x}\sum_{m\neq0}\frac{(-e^{\Delta x \partial_x})^m}{m}f(x)\bigg|_{x=x_j}.
\end{align}
Since the choice of lattice is arbitrary, this holds for any point~$x$, so we can write
\begin{align}
    \partial_xf(x) &= \frac{-1}{\Delta x}\sum_{m\neq0}\frac{(-e^{\Delta x \partial_x})^m}{m}f(x)
\end{align}
since $f$ is bandlimited. Now we can define the bandlimited partial 
derivative clearly as an infinite sum of horizontal transformations of spacing $\Delta x$:
\begin{align}
\label{eq:pBLx}
    \partialBL_x \coloneqq \frac{-1}{\Delta x}\sum_{m\neq0}\frac{(-e^{\Delta x \partial_x})^m}{m},
\end{align}
which agrees with Eq.~\eqref{eq:BLderivdef}.

To gain further insight into this form, \blk we consider the following equivalence:
\begin{align}
\label{eq:partialxaslog}
    \partial_x
&=
    \frac{-1}{\Delta x}
    (-\Delta x \partial_x)
=
    \frac{-1}{\Delta x}
    \log (e^{-\Delta x \partial_x}).
\end{align}
The only issue with this is that the log is multivalued. We choose the branch cut such that $\Im \log (\cdot) \in (-\pi, \pi)$. Since the derivative operator has plane-wave eigenfunctions~$e^{i k x}$, with associated purely imaginary eigenvalues~$i k$, the eigenvalues of $e^{-\Delta x \partial_x}$ are $e^{- i k \Delta x } = e^{- i \pi k / \Omega }$. Given the branch we chose, taking the log of this formula only returns $-i \pi k / \Omega$ if $\abs k < \Omega$. Here we see bandlimitation arising for a functional-analytic reason. The final path to obtain Eq.~\eqref{eq:pBLx} is to first use the relation
\begin{align}
    \log (a)
&=
    \log \left(\frac{1+a}{1+a^{-1}} \right)
\nonumber \\
&=
    \log (1+a) - \log (1+a^{-1})
\end{align}
and expand the logarithms in a Taylor series about~1. Let us consider the first case, which gives
\begin{align}
    \log (a)
&=
    -
    \sum_{m=1}^\infty \frac {(-1)^m} {m} a^m
    +
    \sum_{m=1}^\infty \frac {(-1)^m} {m} a^{-m}
    .
\end{align}
The first series converges for $\abs a < 1$ by analyticity, and this convergence extends to $\abs a \leq 1$ excluding $a = -1$. Analogously, the second series converges for $\abs a \geq 1$ excluding $a=-1$. So the two series together converge only on the unit circle for~$a$ except at $a=-1$. We can change summation variables of the first series to combine it with the second to obtain
\begin{align}
&    \log (a)
=
    \sum_{m\neq 0} \frac {(-1)^m} {m} a^{-m},
\\
\nonumber
& \forall a \in \{z : \abs z = 1, z \neq -1\}
    .
\end{align}
In the case we want to use this formula, $a$ is being replaced with $e^{- \Delta x \partial_x}$, which has eigenvalues~$e^{- i \pi k / \Omega}$ since $\Delta x = \pi / \Omega$. Convergence requires that $-\pi < - \pi k / \Omega < \pi$, which reduces exactly to our bandlimit condition~$\abs k < \Omega$ ---but this time, for a functional-analytic reason. Finally, plugging this into Eq.~\eqref{eq:partialxaslog} gives an expression for $\partialBL_x$:
\begin{align}
\label{eq:partialxexpanded}
    \partial_x
    \sim
    \partialBL_x
&\coloneqq
    \frac{-1}{\Delta x}
    \sum_{m\neq 0} \frac {(-1)^m} {m}
    e^{m \Delta x \partial_x},
    .
\end{align}
where $\sim$ represents the fact that these two operators are exactly equal when the function $f$ being acted upon is bandlimited as above. Rigorously, then,
\begin{align}
    \partial_x f(x) &= \partialBL_x f(x)
\end{align}
if $f$ is bandlimited. Since $\partialBL_x$ is defined with respect to a spacing~$\Delta x$, this is what determines the bandlimit for $f$ for which the ordinary derivative is recovered.

\section{Second derivatives of bandlimited functions}\label{app:BLsecondderivs}

Here we will quickly show a derivation of the second derivative by taking the second derivative of the Shannon reconstruction formula. 
\begin{align}
\label{eq:secondderivfinit}
    {\partial_{xx}} f(x) &= 
    \sum_{k\in\integers} f(x_k)
    {\partial_{xx}}
    \sinc_\pi \left(\frac{x-x_k}{\Delta x}\right)
\\
    &= \sum_{k\in\integers} f(x_k)\frac{\pi}{\Delta x}\bigg[-\frac{2\Delta x \cos \left(\frac{\pi}{\Delta x}(x-x_k)\right)}{\pi(x-x_k)^2}
\nonumber\\*
    &\quad-\frac{\pi}{\Delta x}\sinc_\pi\left(\frac{x-x_k}{\Delta x}\right)+\frac{2\Delta x\sinc_\pi\left(\frac{x-x_k}{\Delta x}\right)}{\pi(x-x_k)^2}\bigg].
\nonumber
\end{align}
Evaluating this at a point on the lattice $x=x_j$ creates a removable singularity in the $k=j$ term in the sum, so we will take the limit $x \to x_j$ to evaluate that term:
\begin{align}
    \lim_{x\rightarrow x_j}
    \partial_{xx} \sinc_\pi \left(\frac{x-x_j}{\Delta x}\right)f(x_j)
&=-\frac{\pi^2}{3 (\Delta x)^2}f(x_j),
\end{align}
which follows from the Taylor expansion
\begin{align}
    \sinc_\pi x
&=
    \sum_{n=0}^\infty
    \frac{(-1)^n (\pi x)^{2n}}{(2n+1)!}
=
    1 - \frac{\pi^2 x^2}{3!} + O(x^4).
\end{align}
For all other lattice points~$x_j$, $j\neq k$, only the first term in the summand of Eq.~\eqref{eq:secondderivfinit} is nonzero.
Thus, the second derivative evaluated at $x = x_j$ can be written as
\begin{align}
&
      {\partial_{xx}} f(x)\bigg\rvert_{x=x_j}
\\*
&\quad = -\frac{\pi^2}{3(\Delta x)^2}f(x_j)-\frac{2}{(\Delta x)^2}\sum_{k\neq j}f(x_k)\frac{(-1)^{j-k}}{(j-k)^2}
,
\nonumber \\
&\quad = -\frac{\pi^2}{3(\Delta x)^2}f(x_j)-\frac{2}{(\Delta x)^2}\sum_{m\neq 0}f(x_{j+m})\frac{(-1)^{m}}{m^2}
\nonumber
,
\end{align}
which agrees with Eq.~\eqref{eq:BL2deriveval}. As with the first derivative, we can write this using the displacement operator:
\begin{align}
&
    {\partial_{xx}} f(x)\bigg\rvert_{x=x_j}
\\*
&\quad= 
     -\frac{\pi^2}{3(\Delta x)^2}f(x_j)-\frac{2}{(\Delta x)^2}\sum_{m\neq 0}\frac{(-e^{\Delta x\partial_x})^{m}}{m^2}f(x_j),
     \nonumber
\end{align}
Since the lattice we use is arbitrary, we can define the bandlimited second derivative operator as
\begin{align}
    {\partialBL_{xx}} 
    &\coloneqq
    -\frac{\pi^2}{3(\Delta x)^2}-\frac{2}{(\Delta x)^2}\sum_{m\neq 0}\frac{(-e^{\Delta x\partial_x})^{m}}{m^2}\label{appblderiv2}
    ,
\end{align}
which agrees with Eq.~\eqref{eq:BL2derivdef}.
For another perspective on this operator, we can start with Eq.~(\ref{appblderiv2}) and recover the second derivative using the polylogarithm function~\cite{goncharov1995polylogarithms}, defined using the power series\blk
\begin{align}
\Li_s (z) &= \sum_{n = 1}^\infty \frac{z^n}{n^s}, \hspace{0.5cm}  \abs z < 1. \label{polylogdef}
\end{align}
This function can be extended by analytic continuation to all $z \in \complex$, with a branch point at $z=1$ and a branch cut typically chosen along the real axis, $(1, \infty)$. The special case $s=2$ is called the dilogarithm~$\Li_2(z)$, for which the following identity will turn out to be useful:
\begin{align}
\Li_2 (z) + \Li_2 (z^{-1}) &= -\frac{1}{2}\left(\frac{\pi^2}{3} + [\ln{(-z)}]^2\right)\label{dilogID},
\\ &\hspace{2cm} z\in \complex \setminus(1,\infty).\nonumber
\end{align} 
The sum over $m$ in Eq.~(\ref{appblderiv2}) can be split into two dilogarithm functions that have the same form as the left-hand side of Eq.~(\ref{dilogID}), giving
\begin{align}
    \partialBL_{xx}
&=
    \frac{-1}{(\Delta x)^2}
    \Bigg[\frac{\pi^2}{3}+
    2
    \sum_{n\neq0} \frac{(-e^{\Delta x \partial_x})^m}{m^2}\Bigg]
\nonumber\\ &=
    \frac{-1}{(\Delta x)^2}\Bigg[\frac{\pi^2}{3}+2\Li_2(-e^{\Delta x \partial_x})
    +2\Li_2(-e^{-\Delta x \partial_x})\Bigg]\nonumber\\
    &={\frac{1}{(\Delta x)^2}}\left[\ln(e^{\Delta x \partial_x})\right]^2\nonumber\\
    &= 
    \partial_{xx} \blk
    \label{dilogderivative}.
\end{align}
 
The UV cutoff---while not explicitly present in the final iteration of Eq.~(\ref{dilogderivative})---can also be recovered from the limiting condition of Eq.~(\ref{dilogID}). The condition on $z$ in Eq.~(\ref{dilogID}) informs us of the condition on $-e^{\Delta x \partial_x}$ and by extension, the limiting condition on the spectrum of the derivative operator. 
The eigenvalues of the derivative operator are imaginary. Thus, the spectrum of the argument of the dilogarithm is a complex phase that, from the limiting condition on Eq.~(\ref{dilogID}), cannot cross the positive real axis. Accounting for the negative sign in this argument, we can write 

\begin{align}
     \operatorname{Spec} (\Im(\Delta x\partial_x)) \in 
(-\pi,\pi)
,
\end{align}
 where Spec indicates the eigenvalue spectrum of the operator. 
Since \(\Delta x = \frac{\pi}{\Omega}\), and assuming $\operatorname{Spec} (\Delta x\partial_x)$ \blk goes around the complex plane only once, the limiting condition can be simply written as 
\begin{align}
    \operatorname{Spec}( \Im \partial_x) \in (-\Omega, \Omega).
\end{align} 
As such, Eq.~(\ref{dilogderivative}) is applicable only when the spectrum of the derivative operator is bounded by the UV cut-off. In other words, Eq.~(\ref{dilogderivative}) is true only when the function or field that the derivative operator is acting upon is bandlimited with a UV cut-off.

Similar to the case of the first derivative, we write a matrix $\mat D_{(2)}$ that acts on a vector of function values on the lattice $\bold{f}$ and maps them to a vector of the values of the function's second derivative on the lattice $\bold{f''}$, in that
\begin{align}
    \bold{f''} = \mat D_{(2)}\bold{f},
\end{align}
where 
\begin{align}
    {\partial_{xx}} f(x_j)=\partial_x\partial_x f(x)\bigg|_{x=x_j}=\sum_{i\in \integers} {[D_{(2)}]}_{jk}f(x_k),
\end{align}
and ${[D_{(2)}]}_{jk}$ is given by
\begin{align}
\label{eq:D2jk}
     {[D_{(2)}]}_{jk}\coloneqq
    \begin{cases}
        -\frac{\pi^2}{3(\Delta x)^2},&j=k\\
        -\frac{2}{(\Delta x)^2}\frac{(-1)^{(j-k)}}{(j-k)^2},&j\neq k,
    \end{cases}
\end{align}
as seen in section \ref{sec:bandlimitedKG}.
Note that these values of ${[D_{(2)}]}_{jk}$ agree with the infinite-order-stencil finite-difference approximation to the second derivative of a function \cite{UpdatedFornberg2022}.
We can verify that $\mat D_{(2)}=\mat D^2$ by checking $ {[D_{(2)}]}_{jk} = \sum_l D_{jl}D_{lk}$. Evaluating the right-hand side using Eq.~\eqref{eq:Dmatrix} gives
\begin{align}
\label{eq:D2asproduct}
    \sum_{l\in\integers}D_{jl}D_{lk}
    &= 
    \frac{1}{(\Delta x)^2}
    \sum_{l \not\in \{j,k\}}
    \frac{(-1)^{l-j}}{(l-j)}
    \frac{(-1)^{k-l}}{(k-l)}
\nonumber \\
    &= 
    \frac{-(-1)^{k-j}}{(\Delta x)^2}
    \underbrace
    {
    \sum_{l \not\in \{j,k\}}\frac{1}{(l-j)(l-k)}
    }
    _{\displaystyle S}.
\end{align}

We will now evaluate the sum~$S$ noted above. Assuming $j \neq k$, and writing $m = k-j$, we can reindex the sum in two different ways ($l \mapsto l \pm m$) and take an average:
\begin{align}
    S
&=
    \frac 1 2
    \sum_{l \not\in \{0,m\}}\frac{1}{l(l-m)}
    +
    \frac 1 2
    \sum_{l \not\in \{0,-m\}}\frac{1}{(l+m)l}
    .
\nonumber
\\
&=
    \frac{1}{2m^2}
    +
    \sum_{l \not\in \{0, \pm m\}}
    \frac{1}{2l}
    \left[
    \frac{1}{l-m}
    +
    \frac{1}{l+m}
    \right]
\nonumber \\
&=
    \frac{1}{2m^2}
    +
    \sum_{l \not\in \{0, \pm m\}}
    \frac{1}{l^2-m^2}
,
\end{align}
where, in the second line, we separated out the $l=-m$ in the first sum and the $l=m$ term in the second.
Now consider the absolutely convergent series expansion~\cite{UpdatedRemmert1991}, with $z \in \complex \setminus \integers$,
\begin{align}
    \frac{\pi \cot \pi z}{z}
&=
    \frac{1}{z^2}
    +
    \sum_{l \neq 0}
    \frac{1}{z^2-l^2}
    .
\end{align}
This function has poles for all $z \in \integers$. The pole at $z=0$ is already isolated into a separate term (the first one), and we can separate out the ones at $z=\pm m$ similarly:
\begin{align}
    \frac{\pi \cot \pi z}{z}
&=
    \frac{1}{z^2}
    +
    \frac{2}{z^2-m^2}
    +
    \sum_{l \not\in \{0,\pm m\}}
    \frac{1}{z^2-l^2}
    .
\end{align}
Rearranging and taking the limit $z \to m$, along with the Laurent expansion
\begin{align}
    \frac{\pi \cot \pi z}{z}
&=
    \frac{1}{m(z-m)}
    -
    \frac{1}{m^2}
    +
    O(z-m),
\end{align}
lets us evaluate
\begin{align}
    S
&=
    \frac{1}{2m^2}
    +
   \lim_{z \to m}
    \sum_{l \not\in \{0,\pm m\}}
    \frac{1}{l^2-z^2}
\nonumber \\
&=
    \frac{1}{2m^2}
    +
    \lim_{z \to m}
    \left[
    \frac{1}{z^2}
    +
    \frac{2}{z^2-m^2}
    -
    \frac{\pi \cot \pi z}{z}
    \right]
\nonumber \\
&=
    \frac{1}{2m^2}
    +
    \lim_{z \to m}
    \left[
    \frac{2}{m^2}
    -
    \frac{1}{m(z+m)}
    \right]
\nonumber \\
&=
    \frac{2}{m^2}
    .
\label{cotformula}
\end{align}

Substituting this into Eq. (\ref{eq:D2asproduct}) gives
\begin{align}
    \sum_{l\in\integers}
    D_{jl}D_{lk} 
    &=
    \frac{-2(-1)^{j-k}}{(\Delta x)^2 (j-k)^2}
&
    (j \neq k)
    .
    \label{eq:D2jneqkfinal}
\end{align}
When $j=k$, we can directly evaluate
\begin{align}
    \sum_{l\in\integers}D_{jl}D_{lj}
=
    \frac{-1}{(\Delta x)^2}
    \sum_{l\neq j}
    \frac{1}{(l-j)^2}
=
    \frac{-1}{(\Delta x)^2}
    \frac{\pi^2}{3}
    ,
\label{eq:D2jeqkfinal}
\end{align}
which uses the well-known solution to the Basel problem~\cite{UpdatedAyoub1974}: $\sum_{m=1}^\infty m^{-2} = \pi^2/6$.

Since Eqs.~\eqref{eq:D2jeqkfinal} and~\eqref{eq:D2jneqkfinal} represent the two cases in Eq.~\eqref{eq:D2jk}, we have proven that
\begin{align}
    \mat D_{(2)} &= \mat D^2,
\intertext{and thus, as expected,}
    \partialBL_{xx} &= \partialBL_x \partialBL_x.
\end{align}

\section{Proof of Theorem~\ref{thm:Hquad}}
\label{app:proofofHquad}

\begin{proof}
The Hamiltonian in question, $\op H_{\mathrm{quad}}$, contains only quadratic terms of the form $\op q_j \op q_{j+a}$, $\op q_j \op p_{j+a}$, and $\op p_j \op p_{j+a}$.
Our strategy will be to calculate the commutator of~$\op P$ with each of these terms and then show that the sum over~$j$ for each of these cancels to zero.

We start with the definition of~$\op P$, reproduced from Eq.~\eqref{BLtotmomentum} and using the abbreviation of the coefficients from~Eq.~\eqref{eq:Dmatrix}:
\begin{align}
 \label{eq:BLtotmomentumwithD}
     \op{P} 
&=
     \sum_{i\in\integers}
     \sum_{n\in\integers}
     D_{n0}
     \op{p}_i
     \op{q}_{i+n}
    \nonumber 
\\
&=
     -\sum_{k,l \in \integers}
     D_{kl}
     \op{p}_k
     \op{q}_{l}
     ,
\end{align}
where we have used $D_{kl} = D_{k-l,0} = -D_{lk}$ and reindexed the sum on the second line. 
where $\opvec q$ and $\opvec p$ are column vectors of the associated operators, and $^\tp$ indicates a row vector of operators instead.

To begin, we calculate
\begin{subequations}
\begin{align}
    [\op{q}_j,\op{P}]
&=
    -
     \sum_{k,l \in \integers}
     D_{kl}
     \underbrace{
     [\op{q}_j,\op{p}_k]
     }
     _{i \delta_{jk}}
     \op{q}_{l}
=
    -i
    \sum_{l \in \integers}
    D_{jl}
    \op{q}_{l}
=
    -i
    \op q'_j
    ,
\\
   [\op{p}_j,\op{P}]
&=
    -
    \sum_{k,l \in \integers}
    D_{kl}
    \op{p}_k
    \underbrace{
    [\op{p}_j,\op{q}_{l}]
    }
     _{-i \delta_{jl}}
=
    -i
    \sum_{l \in \integers}
    D_{jl}
    \op{p}_l
=
    -i
    \op p'_j
    ,
\end{align}
\end{subequations}
where $\op q'_j$ and $\op p'_j$ are defined in Eq.~\eqref{qpprimedefs}. 
Compare these with Eqs.~\eqref{eq:comconditions} to confirm the relationship with the continuous version.

We now evaluate the possible quadratic commutators using the above results, along with
 \begin{align}
     [\op{A}\op{B}, \op{C}]=\op{A}[\op{B},\op{C}]+[\op{A},\op{C}]\op{B},
 \end{align}
which holds for any operators $\op{A}$, $\op{B}$, $\op{C}$:
\begin{subequations}
\begin{align}
    [\op{q}_{j} \op{q}_{k},\op{P}]
&=
    -i
    \sum_{l\in\integers}
    \left(
    D_{kl}
    \op{q}_{j}
    \op{q}_{l}
    +
    D_{jl}
    \op{q}_{l}
    \op{q}_{k}
    \right)
    ,
\\
    [\op{q}_{j} \op{p}_{k},\op{P}]
&=
    -i
    \sum_{l\in\integers}
    \left(
    D_{kl}
    \op{q}_{j}
    \op{p}_{l}
    +
    D_{jl}
    \op{q}_{l}
    \op{p}_{k}
    \right)
    ,
\\
    [\op{p}_{j} \op{p}_{k},\op{P}]
&=
    -i
    \sum_{l\in\integers}
    \left(
    D_{kl}
    \op{p}_{j}
    \op{p}_{l}
    +
    D_{jl}
    \op{p}_{l}
    \op{p}_{k}
    \right)
    .
\end{align}
\end{subequations}
All three of these equations are of the form
\begin{align}
    [\op \xi_{j} \op \zeta_{k},\op P]
&=
    -i
    \sum_{l\in\integers}
    \left(
    D_{kl}
    \op \xi_{j}
    \op \zeta_{l}
    +
    D_{jl}
    \op \xi_{l}
    \op \zeta_{k}
    \right)
\nonumber
\\
&=
    -i
    (
    \op \xi_{j}
    \op \zeta'_{k}
    +
    \op \xi'_{j}
    \op \zeta_{k}
    )
    ,
\end{align}
with $\op \xi$ and $\op \zeta$ each standing for either $\op q$ or $\op p$. We recognize this as a bandlimited version of the derivative product rule in the discrete representation.

The quantities we actually want are all of the form
\begin{align}
\label{eq:genformcommutator}
    \sum_{j\in\integers}
    [\op \xi_{j} \op \zeta_{j+a},\op P]
&=
    -i
    \sum_{j\in\integers}
    (
    \op \xi_{j}
    \op \zeta'_{j+a}
    +
    \op \xi'_{j}
    \op \zeta_{j+a}
    )
    .
\end{align}
To evaluate this, we use the following equivalent ways to express the action of the bandlimited derivative:
\begin{align}
\label{eq:xiderivforms}
    \op \xi'_j
&=
    \sum_{l\in\integers}
    D_{jl}
    \op \xi_l
=
    \sum_{m\in\integers}
    D_{m0}
    \op \xi_{j-m}
=
    -
    \sum_{m\in\integers}
    D_{m0}
    \op \xi_{j+m}
    ,
\end{align}
where the second and third sums are reindexed with ${m = j-l}$ and ${m = l-j}$, respectively, and we use the properties of~$D$ discussed below Eq.~\eqref{eq:BLtotmomentumwithD}. Using these, we can expand Eq.~\eqref{eq:genformcommutator} as
\begin{align}
    \label{eq:D10method}
    \sum_{j\in\integers}
    [\op \xi_{j} \op \zeta_{j+a},\op P]
&=
     -i
    \sum_{m,j\in\integers}
    D_{m0}
    (
    \op \xi_{j}
    \op \zeta_{j+a+m}
    -
    \op \xi_{j-m}
    \op \zeta_{j+a}
    )
\nonumber
\\
&=
     -i
    \sum_{m,j\in\integers}
    D_{m0}
    (
    \op \xi_{j}
    \op \zeta_{j+a+m}
    -
    \op \xi_{j}
    \op \zeta_{j+a+m}
    )
\nonumber
\\
&=
    0
    ,
\end{align}
where we have used one form each from Eq.~\eqref{eq:xiderivforms} in the first line, and we have reindexed the $j$ sum ($j \mapsto j+m$) in the second. Thus,
\begin{align}
    \sum_{j\in\integers}
    [\op q_{j} \op q_{j+a},\op P]
&=
    0
    ,
\\
    \sum_{j\in\integers}
    [\op q_{j} \op p_{j+a},\op P]
&=
    0
    ,
\\
    \sum_{j\in\integers}
    [\op p_{j} \op p_{j+a},\op P]
&=
    0
    ,
\end{align}
as we wanted to prove.

This process shows that, due to the antisymmetry of the coefficients of the sum over $m$ and the freedom to re-index the sum over $j$, every term in the summation will have a counterpart of opposite sign. As a result, the total sum will be zero. 
Finally, as all of these are equal to zero, any linear combination of them will also be equal to zero. As such, any lattice-translationally invariant quadratic Hamiltonian~$\op H_{\mathrm{quad}}$ will commute with~$\op{P}$. And thus, the bandlimited total momentum~$\op{P}$ is conserved in such a system.
\end{proof}
 
While $\op{P}$ commutes with (lattice-translationally invariant) Hamiltonians of quadratic order, the same is not always true for Hamiltonians of polynomial order greater than two.
One can check this by calculating
\begin{align}
    \sum_{j\in \integers}
    [\op{q}_j^3, \op{P}]
&= 
    \sum_{j\in\integers}
    \left(
    \op{q}_j^2
    [\op{q}_j,\op{P}]
    +\op{q}_j
    [\op{q}_j, \op{P}]
    \op{q}_j
    + [\op{q}_j, \op{P}]
    \op{q}_j^2
    \right)
\nonumber \\
&=
    -i\sum_{j\in \integers}
    3\op{q}_j^2
    \op{q}_j'.
    \label{eq:cubiccommutator}
\end{align}

Like the product rule acknowledged above, this is an example of a bandlimited version of the derivative chain rule in the discrete representation.
Using the properties of~$D$ discussed below Eq.~\eqref{eq:BLtotmomentumwithD}, we can rewrite Eq.~\eqref{eq:cubiccommutator} using a process similar to the one shown in Eq.~\eqref{eq:D10method}:
\begin{align}
    \label{eq:D16method}
      \sum_{j\in \integers}
    [\op{q}_j^3, \op{P}]
    &=  
    -3i
    \sum_{m,j\in\integers}
    D_{m0}
    \op{q}_j^2
    \op{q}_{j+m}
\\*
    &=
    -3i
    \sum_{m,j\in\integers} 
    \frac{1}{2}
    D_{m0}
    \left(
    \op{q}_j^2
    \op{q}_{j+m}
    -
    \op{q}_{j+m}^2
    \op{q}_j
    \right).
\nonumber 
\end{align}
However, unlike the case of Eq.~\eqref{eq:D10method}, the sums over $m$ and $j$ cannot be reindexed such that the terms inside the brackets  of Eq.~\eqref{eq:D16method} cancel to zero. 
As a result, we have
\begin{align}
    \sum_{j\in \integers}[\op{q}_j^3, \op{P}]\neq 0.
\end{align}
Crucially, this means that while Hamiltonians of quadratic power will always commute with the total momentum operator of the field, the same is not true for Hamiltonians of polynomial power greater than two, indicating that interacting fields on a lattice may not possess continuous translational invariance, unlike their free field counterparts.

\newpage
\bibliographystyle{IEEEtran}
\bibliography{refs}

\end{document}